\documentclass{article}

  \usepackage{underscore}         
  \usepackage[T1]{fontenc}        
\usepackage{amsmath}
\usepackage{amssymb}
\usepackage{amsthm}
\usepackage{xcolor,color}
\usepackage{hyperref}
\usepackage[capitalise]{cleveref}
\usepackage{stmaryrd}
\usepackage{wasysym}

\newcommand{\ra}{\rightarrow}
\newcommand{\eq}{\leftrightarrow}

\newcommand{\sneg}{{\sim}}

\newcommand{\ocirc}{{\fullmoon}} 
\newcommand{\bcirc}{{\newmoon}}

\newcommand{\aland}{\tilde{\land}}
\newcommand{\alor}{\tilde{\lor}}
\newcommand{\ato}{\tilde{\to}}
\newcommand{\aneg}{\tilde{\neg}}

 \newcommand{\obl}{\textsf{O}}

\newcommand{\ph}{\varphi}
\newcommand{\ps}{\psi}
\newcommand{\ch}{\gamma}

\theoremstyle{definition}
\newtheorem{definition}{Definition}[section]
\newtheorem{theorem}[definition]{Theorem}
\newtheorem{corollary}{Corollary}[definition]
\newtheorem{proposition}[definition]{Proposition}
\newtheorem{lemma}[definition]{Lemma}

\theoremstyle{remark}
\newtheorem{remark}{Remark}

\newcommand{\prop}{\mathsf{Prop}}

\newcommand{\eqthen}{\text{ implies }}

\newcommand{\ent}[2][]{\mathbin{\vdash_{#2}^{#1}}}
\newcommand{\nent}[2][]{\mathbin{\nvdash_{#2}^{#1}}}
\newcommand{\sent}[2][]{\mathbin{\vDash_{#2}^{#1}}}
\newcommand{\nsent}[2][]{\mathbin{\nvDash_{#2}^{#1}}}

\newcommand{\mW}{\mathcal{W}}

\newcommand{\A}{\mathcal{A}}

\newcommand{\mK}{\mathcal{K}}
\newcommand{\mR}{\mathcal{R}}
\newcommand{\mP}{\mathcal{P}}
\newcommand{\mX}{\mathbb{X}}
\newcommand{\mY}{\mathbb{Y}}

\newcommand{\mV}{\mathcal{V}}
\newcommand{\Hi}{\mathcal{H}}
\newcommand{\dP}{\mathbb{P}}
\newcommand{\MP}{{\sf MP}}

\newcommand{\clon}{\mathsf{CLoN}}
\newcommand{\clun}{\mathsf{CLuN}}
\newcommand{\clan}{\mathsf{CLaN}}

\newcommand{\eclon}{\mathsf{CLoNE}}

\newcommand{\lfi}{\textsf{LFI}}
\newcommand{\ldi}{\textsf{LDI}}

\newcommand{\set}[1]{\{ #1 \}}
\newcommand{\tuple}[1]{\langle #1\rangle}

\newcommand{\ptru}[1]{[#1]^{+}}
\newcommand{\mtru}[1]{[#1]^{-}}

\newcommand{\trp}[1]{\llbracket #1 \rrbracket}

\newcommand{\sa}{_{\Sigma}}
\newcommand{\pl}{^{+}}

\newcommand{\mn}{^{-}}

\newcommand{\pls}{_{\Sigma}^{+}}
\newcommand{\mns}{_{\Sigma}^{-}}

\newcommand{\prim}{^\prime}

\title{Modal Extensions for \textsf{CLoN} with Bi-neighborhood Semantics}
\author{Mahan Vaz Silva\\
Ruhr University Bochum\\
Institute for Philosophy I\\ 
mahan.vazsilva@edu.rub.de
\and 
Daniel Skurt\\
Ruhr University Bochum\\
Bochum, Germany\\
Institute for Philosophy I\\
daniel.skurt@rub.de}

\begin{document}
\maketitle

\begin{abstract}
In this paper we will present neighborhood semantics for non-normal modal extensions of $\clon$, which is a sublogic of {\sf FDE}. Our framework is built upon earlier work on {\sf FDE}-based non-normal modal logics and employs two different neighborhood functions for each modal operator. Despite being a logic with a very weak negation operator, we will show that with the right definition of the rejection sets of the modal operators, we can validate non-trivial axioms that contain the weak negation operator. The philosophical aim of our approach is to construct the basis for deontic logics that are able to accommodate both the usual deontic principles and moral dilemmas, without resulting in trivialization of the system. 

\end{abstract}

\section*{Introduction}
In this paper we are going to introduce neighborhood semantics for a subsystem of \textsf{FDE}-based non-normal modal logics, we call $\eclon$, and its extensions. The logic $\eclon$ is a modal expansion of $\clon$, discussed in detail by Diderik Batens et. al. in \cite{Batens1999embedding} as a subsystem of classical logic\footnote{In \cite{Batens80} the system $\clun$ was presented as the system $\sf PI$. In \cite{Batens1999embedding}, however, the name was changed to accommodate the following anagrams: $\clun$ stands for ``\textbf{C}lassical \textbf{L}ogic gl\textbf{u}tty with respect to \textbf{N}egation''; $\clan$, for ``\textbf{C}lassical \textbf{L}ogic g\textbf{a}ppy with respect to \textbf{N}egation''; and $\clon$, for ``\textbf{C}lassical \textbf{L}ogic b\textbf{o}th gappy and glutty with respect to \textbf{N}egation''.}. In its essence, $\clon$ can be interpreted as positive classical logic with a negation operator in its language, but with no axioms nor semantic clauses for negation. Nevertheless, as was shown in \cite{OmoriSkurt2021}, if interpreted in a four-valued \textsf{FDE} like semantics, $\clon$ contains truth-conditions for all operators, including negation, but no falsity conditions, and can be interpreted as a paraconsistent and paracomplete logic. By adding the Law of Excluded Middle to $\clon$, one obtains $\clun$, its paraconsistent subsystem. By adding the Principle of Explosion, one obtains the subsystem $\clan$, which is paracomplete.

Developing neighborhood semantics for modal expansions of logics in the $\clon$ family can be motivated from different perspectives.
First, in \cite{VazMaruchi25} it was defended that, if we were to define a paraconsistent deontic logic to be applied to situations in which conflicting obligations occur, then one should use $\clun$ as the base logic, instead of any of the {\lfi}s\footnote{{\lfi}s refer to the Logics of Formal Inconsistency, developed in \cite{Marcos2007}. The context of the discussion referred to here is the development of the Logics of Deontic Inconsistency (or {\ldi}s) in \cite{Coniglio2009b, Coniglio2009, Peron2008}, which add modal axioms akin to the ones of Standard Deontic Logic to the propositional {\lfi}s.}. And since $\clun$ is an extension of $\clon$, all results obtained for $\clon$ in this paper will carry over to non-normal modal extensions of $\clun$. 
Second, so far as we are aware, there is no extensive study of modalities applied to the $\clon$ family. The original work of Batens, le Clercq and Kurtonina \cite{Batens1999embedding}, only presents $\clon$ as a subsystem of classical logic, while other works, although investigating some modal extensions of the systems based on $\clon$ \cite{Buenosoler2009, ConiglioPeron2013, Coniglio2009b, Coniglio2009}, assume axiom {\sf K} from the start. 
In particular, the authors of \cite{ConiglioPeron2013} propose a system called ${\sf CLoN0.5}$, which assumes the axiom \textsf{K}, a weaker version of necessitation and two extra axioms, one that is a modal version of excluded middle $\Box(\ph \vee \neg \ph)$ and a modal version of explosion $\Box (\ph \to (\neg \ph \to \ps))$. 

Our approach to non-normal modal $\clon$-based logics
assumes nothing but the extensionality rule, rather than specific axioms regarding the modality. This allows us to commit to a minimal modal framework which gives us more flexibility on which axioms to add to our system. To this end we employ the techniques presented in \cite{DalOliNeg18,Drobyshevich2021}, which will allow us to systematically develop a semantics for non-normal modal logics based on $\clon$, as well as $\clun$ and $\clan$, and will naturally connect our results with the results about {\sf FDE}-based non-normal modal logics.

Based on that the paper is structured as follows. In the next section we will give a short overview over $\clon$, $\clun$ and $\clan$, which will serve as our propositional non-modal base logic. In Section \ref{sub_main_weak} we will introduce the non-normal logic $\eclon$ in detail, by describing its Hilbert-calculus and neighborhood semantics, including general proofs of soundness and completeness. This is followed in Section \ref{main_extensions} by presenting various extensions of $\eclon$ containing more than one modal operator and the weak negation operation. Finally, in Section \ref{sec:final} we will briefly discuss the potential application of our framework in the existence of moral dilemmas debate, together with minor technical remarks.


\section{\textsf{CLoN}, \textsf{CLuN}, \textsf{CLaN}}\label{clon}

In this section, we will introduce both Hilbert-calculus and non-deterministic semantics for the base logic $\clon$ and its extensions $\clun$ and $\clan$. To this end, consider a propositional signature $\Sigma$ with one unary connective $\neg$ and  the binary connectives $\wedge$, $\vee$, $\to$. We will furthermore make use of a biconditional $\eq$, defined in the usual way. Let $\mV$ be a denumerable set of propositional variables $\mV=\{p_0,p_1, \ldots\}$ and let $For(\Sigma)$ be the algebra of formulas over $\Sigma$ freely generated by $\mV$. 

\smallskip

\noindent The system $\clon$ is characterized by a Hilbert-calculus $\Hi$ as follows:

\noindent 
{\small
\begin{minipage}{.47\textwidth}
\begin{align*}
& \ph {\to} (\ps {\to} \ph) \tag{{\sf A1}} \label{CpCqp}\\
& (\ph {\to} (\ps {\to} \ch)) {\to} ((\ph {\to} \ps) {\to} (\ph {\to} \ch)) \tag{{\sf A2}} \label{CCpCqrCCpqCpr}\\
& ((\ph {\to} \ps){\to} \ph){\to} \ph \tag{{\sf A3}} \\
& \ph {\to} (\ph {\lor} \ps) \tag{{\sf A4}} \label{CpApq}\\
& \ps {\to} (\ph {\lor} \ps) \tag{{\sf A5}} \label{CqApq}
\end{align*}
\end{minipage}
\ 
\begin{minipage}{.47\textwidth}
\begin{align*}
& (\ph  {\to} \ch ) {\to} ((\ps  {\to} \ch ) {\to} ((\ph \lor \ps ) {\to} \ch )) \tag{{\sf A6}}\\
& (\ph \land \ps) {\to} \ph \tag{{\sf A7}} \label{CKpqp}\\
& (\ph \land \ps) {\to} \ps \tag{{\sf A8}} \label{CKpqq}\\
& (\ch  {\to} \ph ) {\to} ((\ch  {\to} \ps ) {\to} (\ch  {\to} (\ph  {\land} \ps ))) \tag{{\sf A9}} \\
& \displaystyle\frac{\ \ph \quad \ph {\to} \ps\ }{\ps} \tag{{\sf MP}} \label{MP}
\end{align*}
\end{minipage}
}

\begin{remark}
    Note that $\clon$ is essentially positive classical propositional logic in a language with a negation operator that contains no axioms nor rules for $\neg$. However, it can be shown by a Gödel/Dugundji\footnote{See \cite[Th. 11]{Avron2007a}} like construction that it is impossible to construct a finitely many-value deterministic semantics for $\clon$.
\end{remark}

We write $\Gamma \vdash_\clon \ph$ if there is a sequence of formulas $\ps_1, \dots, \ps_n, \ph$, $n\geq 0$, such that every formula in the sequence either (i) belongs to $\Gamma$; (ii) is an axiom of $\Hi$; (iii) is obtained by (\MP) from formulas preceding it in sequence.

 The logics $\clun, \clan$ are obtained from $\clon$ by adding to the axioms all instances of the Principle of Excluded Middle $(\ph \vee \neg \ph)$ and the Principle of Explosion $\ph \to (\neg \ph \to \ps)$, respectively. 

Let us first present a (\emph{non-deterministic}) swap structure semantics for $\clon, \clun, \clan$ without any modal operator. See for instance \cite[Ch. 6]{carnielli2016paraconsistent} and \cite{Coniglio2019a} for more applications of swap structures.

\begin{definition}    
    \label{def:clon-mop}
    A structure $\A= (A, \tilde{\neg}, \tilde{\land}, \tilde{\lor}, \tilde{\to})$ is called a\emph{swap structure} for $\clon$ if $A = \mathbf{2}^2 = \{0,1\}\times\{0,1\}$ and the multioperations are defined in  the following way:\footnote{$\sqcap$, $\sqcup$ and $\supset$ are the operations corresponding to the Boolean meet, join and implication in $\mathbf{2}$, respectively.}
        \[
        \begin{array}{rcl}
			a \tilde{\land} b & := & \{(c_1, c_2) \in A : c_1 = a_1 \sqcap b_1 \}\\
            a \tilde{\lor} b & := & \{(c_1, c_2) \in A : c_1 = a_1 \sqcup b_1 \}\\
            a \tilde{\to} b & := & \{(c_1, c_2) \in A : c_1 = a_1 \supset b_1 \}\\
			\aneg a & := & \{ (c_1, c_2) \in A : c_1 = a_2\}
		\end{array}
        \]
    Let, furthermore, $D = \{(c_1,c_2) \in A : c_1 = 1\}$ be the set of designated values.
\end{definition}

\begin{remark}
Note that the multioperations return sets of values rather than single values, while, as defined in Definition \ref{def:v-clon}, valuations operate on single values. See \cite{AZSurvey} for a general introduction to non-deterministic semantics and \cite{OmoriSkurt2021} for non-deterministic semantics for $\clon$, $\clun$ and $\clan$ in particular.
\end{remark}

\begin{remark}
The four truth values in $A$ could also be expressed in a more standard way as follows: $t = (1,0)$, $b=(1,1)$, $n=(0,0)$, $f=(0,1)$. It is then easy to see that the truth-tables for $\clon$ are generalizations of the truth-tables for \textsf{FDE}, e.g.:
\[
\begin{array}{r|cccc}
    \ph & t & b & n & f \\\hline
    \neg \ph & \{n,f\}& \{t,b\}& \{n,f\}& \{t,b\}
\end{array}
\]
   \end{remark} 

\begin{definition}
    A swap structure for $\clun$ is obtained from the definition above by replacing $A$ with $A^u = \{(x,y) \in A: x \sqcup y = 1  \}$. Similarly a swap structure for $\clan$ is obtained by replacing $A$ with $A^a = \{(x,y) \in A: x \sqcap y = 0  \}$. In other words, $A^u = A\backslash \{(0,0)\}$ and $A^a = A\backslash \{(1,1)\}$.
\end{definition}

\begin{remark}
There are (at least) two corresponding semantics for $\clon$, a two-valued and a four-valued one and the matrix for classical logic is a refinement of the two-valued semantics for $\clon$, see \cite{OmoriSkurt2021}. At the same time it is possible to show that the matrix for \textsf{FDE} is just a refinement of the four-valued Nmatrix for $\clon$, see \cite{AZSurvey} for the notion of refinement. For the purpose of this paper we are employing the four-valued \textsf{FDE}-like setting, since it allows us to address the truth-conditions for negation.
\end{remark}  

\begin{definition}
		\label{def:v-clon}
            Define a function $v: For(\Sigma) \to A$ to be a \textit{swap valuation} for $\clon$ if every condition below is satisfied, for $\ph,\ps\in For(\Sigma)$:
			
			\begin{enumerate}
				\item$v(\ph \land \ps) \in v(\ph) \aland v(\ps)$				
                \item$v(\ph \lor \ps) \in v(\ph) \alor v(\ps)$
                \item $v(\ph \to \ps) \in v(\ph) \ato v(\ps)$
				\item $v(\neg \ph) \in \aneg v(\ph)$								
			\end{enumerate}
	\end{definition}

\begin{remark} \label{remval}
Note that we can also rewrite any swap valuation $v$ as follows $v=(v_1,v_2)$ such that $v_1,v_2: For(\Sigma) \to {\bf2}$. 
Hence, $v(\ph)=(v_1(\ph),v_2(\ph))$ for every formula $\ph$. This means that, for all formulas $\ph$ and $\ps$, where $\sqcap, \sqcup$, and $\supset$ are classical conjunction, disjunction and implication:

\begin{enumerate}\setcounter{enumi}{-1}
\item $v(\ph) \in D$ iff $v_1(\ph)=1$
\item $v_1(\ph \wedge \ps)= v_1(\ph) \sqcap v_1(\ps)$
\item $v_1(\ph \vee \ps)= v_1(\ph) \sqcup v_1(\ps)$
\item $v_1(\ph \to \ps)= v_1(\ph) \supset v_1(\ps)$
\item $v_1(\neg \ph)=v_2(\ph)$
\end{enumerate}
\end{remark}



	\begin{definition}
        \label{def:l-cons}
        Let $\Gamma \cup\{\ph\} \subseteq For(\Sigma)$. We say that $\ph$ is a logical consequence of $\Gamma$ in $\clon$, denoted $\Gamma \vDash_{\clon}  \ph$, iff for all swap valuations $v$, if $v(\ps) \in D$ for all $\ps \in \Gamma$, then $v(\ph) \in D$.\\
        Note that consequence relations for $\clun$ and $\clan$ can be defined accordingly.
	\end{definition}

\begin{remark}
    From the above definition it is easy to see that $\clon$ and $\clun$ are paraconsistent logics, since they do  not validate the Principle of Explosion. $\clan$ on the other validates Explosion. I.e., $\ph, \neg \ph \vdash_\mathsf{CLxN} \ps$, will not be valid for $\mathsf{x} \in \{\mathsf{o},\mathsf{u}\}$ but valid for $\mathsf{x} = \mathsf{a}$. 
\end{remark}

From the above definitions, it is straightforward to prove soundness and completeness for $\clon$, see also \cite{OmoriSkurt2021}.

\begin{theorem}[Soundness and completeness of $\clon$\ w.r.t. swap models] \ \\
    For every $\Gamma \cup \{\ph\} \subset For(\Sigma)$, $\Gamma \vdash_{\clon} \ph$ iff $\Gamma \vDash_\clon  \ph$.
\end{theorem}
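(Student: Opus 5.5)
Soundness will go by induction on the length of a derivation of $\ph$ from $\Gamma$. Fix a swap valuation $v$ with $v(\ps)\in D$ for all $\ps\in\Gamma$. By Remark \ref{remval}, $v(\ch)\in D$ iff $v_1(\ch)=1$. Moreover $v_1$ commutes classically with $\land,\lor,\to$, so on the positive fragment $v_1$ behaves exactly like a classical two-valued valuation. Formulas whose main connective is $\neg$ receive an arbitrary value $v_2$ of their argument, so they behave like fresh atoms.

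Each axiom (\textsf{A1})--(\textsf{A9}) is a classical tautology in which $\ph,\ps,\ch$ occur only under positive connectives. Hence $v_1$ assigns it $1$ by the usual truth-table check. Members of $\Gamma$ are designated by assumption. For (\MP), if $v_1(\ph)=1$ and $v_1(\ph\to\ps)=v_1(\ph)\supset v_1(\ps)=1$, then $v_1(\ps)=1$. This direction is routine.

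For completeness I will argue contrapositively. Suppose $\Gamma\nvdash_\clon\ph$. First observe that the Deduction Theorem holds for $\vdash_\clon$: the only rule is (\MP), and (\textsf{A1}), (\textsf{A2}) are available, so the standard proof goes through. Using this and Lindenbaum's lemma, I will extend $\Gamma$ to a set $\Delta\supseteq\Gamma$ with $\Delta\nvdash_\clon\ph$ that is maximal with this property, i.e.\ \emph{$\ph$-saturated}. Such a $\Delta$ is closed under $\vdash_\clon$. Using (\textsf{A3}) (Peirce), (\textsf{A4})--(\textsf{A9}) and the Deduction Theorem, I will show that $\Delta$ satisfies the following conditions:
\begin{itemize}
\item $\ps\land\ch\in\Delta$ iff both $\ps\in\Delta$ and $\ch\in\Delta$;
\item $\ps\lor\ch\in\Delta$ iff $\ps\in\Delta$ or $\ch\in\Delta$;
\item $\ps\to\ch\in\Delta$ iff $\ps\notin\Delta$ or $\ch\in\Delta$.
\end{itemize}
The implication clause is the delicate one. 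If $\ps\notin\Delta$, maximality gives $\Delta,\ps\vdash\ph$, hence $\Delta\vdash\ps\to\ph$. Peirce's law then yields $\Delta\vdash\ps\to\ch$: from $\ps\to\ph$ and $(\ps\to\ch)\to\ph$ one would derive $\ph$, so $(\ps\to\ch)\notin\Delta$ would lead to $\Delta\vdash\ph$, a contradiction. This is the step I expect to be the main obstacle, together with the disjunction clause, which uses (\textsf{A6}) in a similar way.

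Given $\Delta$, define $v(\ch)=(v_1(\ch),v_2(\ch))$ with $v_1(\ch)=1$ iff $\ch\in\Delta$, and $v_2(\ch)=1$ iff $\neg\ch\in\Delta$. Next I check that $v$ is a swap valuation in the sense of Definition \ref{def:v-clon}. The membership conditions for $\land,\lor,\to$ require only the first coordinate to be correct, and the conditions on $\Delta$ listed above give exactly that. The $\neg$ condition requires $v_1(\neg\ch)=v_2(\ch)$, which holds by definition. The second coordinates of compound formulas are unconstrained by the multioperations, so no further conditions arise.

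Then $v(\ps)\in D$ for all $\ps\in\Gamma\subseteq\Delta$, while $\ph\notin\Delta$ gives $v(\ph)\notin D$. Hence $\Gamma\nvDash_\clon\ph$. The same argument will work for $\clun$ and $\clan$: there $\Delta$ contains $\ch\lor\neg\ch$, respectively is closed under explosion, which forces $v(\ch)\in A^u$, respectively $A^a$.
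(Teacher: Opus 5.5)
Your proof is correct. The paper gives no argument for this theorem beyond calling it straightforward and pointing to Omori and Skurt. Your construction is the standard canonical-valuation argument being alluded to: a $\ph$-saturated extension $\Delta\supseteq\Gamma$, with $v_1(\ch)=1$ iff $\ch\in\Delta$ and $v_2(\ch)=1$ iff $\neg\ch\in\Delta$. It is also the strategy the paper itself uses later for $\eclon$, with prime theories and the sets $\set{\Gamma\mid\ph\in\Gamma}$, $\set{\Gamma\mid\neg\ph\in\Gamma}$. You correctly identify the one substantive step, the implication clause, where Peirce's law (A3) is needed; the negation condition then holds by definition, since the multioperations constrain only first coordinates.
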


\begin{remark}
Note $\ph \eq \ps \nvdash \neg\ph \eq \neg\ps$, hence $\clon$ is not closed under replacement, were $\eq$ is the (weak) equivalence connective defined as usual by the (weak) conditional $\to$. 
\end{remark}

\begin{remark}
Before we close this section, a remark on the status of $\neg$ as a negation operator seems to be in order. While this topic was already discussed to some extent in \cite{OmoriSkurt2021}, we would like to at least mention why $\neg$ could be interpreted as a negation in $\clon$. Typically, the truth conditions for negation are stated as follows: a negation of $\ph$ is true iff $\ph$ is false. Interpreting $\{(c_1,c_2) \in A : c_1 = 1\} = \{t,b\}$ as ``true'' values and $\{(c_1,c_2) \in A : c_2 = 1\} = \{b,f\}$ as ``false'' values, we immediately obtain, that $\neg$ satisfies the truth condition for negation. Skepticism on that status of $\neg$ arises from the fact that in $\clon$ there are no falsity conditions for $\neg$. But, at the very least, ``$\neg$'' can be be considered half a negation.
\end{remark}



Now, for the remainder of this paper let $\Sigma = \langle \neg, \Box, \Diamond, \wedge, \vee, \ra \rangle$. In the coming sections we propose to expand $\clon$ (and as a consequence $\clun$ and $\clan$, as well) by two distinct modal operators $\Box$ and $\Diamond$, that can be interpreted as necessity and possibility and present neighborhood semantics for those expansions. It will be shown, since the negation free fragment of $\clon$ behaves exactly as positive classical logic, that the extensionality rule\[(\ocirc\textsf{Ext})\ \ph \eq \ps /\ocirc\ph \eq \ocirc \ps\]
will be valid, where $\ocirc \in \{\Box,\Diamond\}$. Furthermore, we will recapture positive axioms and rules for $\Box$, and its relative axioms for $\Diamond$ in the usual way and with the usual constraints on neighborhood semantics such that we recover typical axioms such as $\textsf{M,C,N,K,T,4}$. 

We highlight that the property of the negation operator $\neg$ already mentioned above, namely the missing falsity condition, prevents us from defining dual modal operators, as usual. Instead, we inspire ourselves in the spirit of \cite{Dunn1995}, by defining $\Diamond$ independently. The aim is to define an operator that 
is distinct from $\Box$ and not necessarily a \textit{possibility} operator. 
While the symbols $\Box, \Diamond$ might coincide with their usual definition, they do not necessarily behave in such a way. 

\section{The basic non-normal \textsf{CLoN}-based modal logic - \textsf{CLoNE}}\label{sub_main_weak}

\subsection{Hilbert-calculus for \textsf{CLoNE}}

Neighbourhood semantics are typically used for modal logics with an abstract modality $ \ocirc $, where $\ocirc \in \{\Box, \Diamond\}$ satisfying \emph{(strong) extensionality rule}:
\[
(\ocirc\textsf{Ext})\ \ph\eq\ps/\ocirc\ph\eq\ocirc\ps.
\]
For logics with the usual replacement theorem this rule guarantees that weak equivalence still induces a congruence relation on the algebra of formulas. This is not the case, however, for $\clon$-based logics. In this setting, weak negation is not congruential, hence replacement does not hold. 


Nevertheless, we will add this rule and others to $\clon$ ($\clun$, $\clan$). In particular we will also consider the following extensionality principle:
\[
(\neg \ocirc\textsf{Ext})\ \neg\ph\eq\neg\ps/\neg\ocirc\ph\eq\neg\ocirc\ps.
\]





\noindent We denote by $ \eclon$ the smallest logic in a signature over $ \clon$ containing $\Box$ and $\Diamond$, satisfying $(\Box\textsf{Ext})$ and $(\Diamond\textsf{Ext})$.


We will be working with extensions of $\eclon$ by a number of different inference rules and we are identifying axioms with zero-premise rules. By a logic $L$ we will now understand its set of theorems, which is closed under the rules. For a set of rules $\mathcal{R}$ we denote by $\eclon+\mathcal{R}$ the smallest logic closed under all rules in $\mathcal{R}$. 
With any logic $L=\eclon+\mathcal{R}$ we associate the following consequence relation:
\begin{center}
    $\Gamma\ent{L}\ph$ if $\ph$ can be derived from $L\cup\Gamma$ using modus ponens only;
\end{center}
where $\Gamma$ is an arbitrary set of formulas, $\ph$ is a formula. 

\begin{remark}
    Note that the consequence relation, defined that way, is also called \emph{local consequence relation}. Additionally, it is possible to define a \emph{global consequence relation} as follows:  $\Gamma\ent[g]{L}\ph$ if $\ph$ can be derived from $L\cup\Gamma$ using modus ponens, strong extensionality and all rules from $\mathcal{R}$. See \cite{Kracht07} on local and global consequence relations in general. However, since usually global consequence relations are considered to obtain algebraizability results for logics, which is not our aim for this paper, we will only work with the notion of local consequence.
\end{remark}

Several remarks are in order. If $ (r) $ is a zero-premise rule, then closure of $ \Gamma $ under $ (r) $ is simply equivalent to the fact that the conclusion of $ (r) $ belongs to $ \Gamma $. 
Next, note that the consequence relation is defined in such a way that weak extensionality rules (and the other rules in $\mathcal{R}$) are only used to compute $ L $ itself (as a set of formulas) but not directly in derivations. Since modus ponens is the only rule of inference for this consequence relation, it is internalized by weak implication and thus we will be able to obtain the usual deduction theorem with respect to it. 

The following property is then proved routinely.


\begin{theorem}[Deduction Theorem]
Suppose $ L=\eclon+\mathcal{R} $. Then for all $ \Gamma\cup\set{\ph,\ps}\subseteq For(\Sigma)$ we have
\[
\Gamma,\ph\ent{L}\ps\iff\Gamma\ent{L}\ph\ra\ps.
\]
\end{theorem}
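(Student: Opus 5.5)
The argument is the standard Herbrand--Tarski proof of the deduction theorem for Hilbert systems whose only rule is modus ponens. The key observation is that, by definition of $\ent{L}$, the extensionality rules and the rules in $\mathcal{R}$ are used only to generate the set $L$ of theorems. A derivation witnessing $\Gamma\ent{L}\ps$ is therefore a finite sequence in which each formula belongs to $L$, belongs to $\Gamma$, or follows from two earlier formulas by (\MP). Since $L\supseteq\clon$-theorems, $L$ contains all instances of \eqref{CpCqp} and \eqref{CCpCqrCCpqCpr}, now with $\ph,\ps,\ch$ ranging over the full modal language $For(\Sigma)$. That is all we need.

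For the right-to-left direction, suppose $\Gamma\ent{L}\ph\ra\ps$. Append $\ph$, which is a hypothesis from $\Gamma\cup\{\ph\}$, to the derivation. Then apply (\MP) to obtain $\ps$. Hence $\Gamma,\ph\ent{L}\ps$.

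For the left-to-right direction, fix a derivation $\chi_1,\dots,\chi_n=\ps$ from $L\cup\Gamma\cup\{\ph\}$. We show by strong induction on $i$ that $\Gamma\ent{L}\ph\ra\chi_i$ for each $i$, distinguishing cases.
\begin{itemize}
\item If $\chi_i\in L\cup\Gamma$, then the sequence $\chi_i$, $\chi_i\ra(\ph\ra\chi_i)$ (an instance of \eqref{CpCqp}, hence in $L$), $\ph\ra\chi_i$ works, with (\MP) giving the last step.
\item If $\chi_i=\ph$, we need $\ph\ra\ph$. This is derived as usual from instances of \eqref{CpCqp} and \eqref{CCpCqrCCpqCpr}: take $\ph\ra((\ph\ra\ph)\ra\ph)$ and $(\ph\ra((\ph\ra\ph)\ra\ph))\ra((\ph\ra(\ph\ra\ph))\ra(\ph\ra\ph))$, then apply \MP\ twice together with $\ph\ra(\ph\ra\ph)$. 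In fact $\ph\ra\ph\in L$ already, since $L$ is closed under \MP.
\item If $\chi_i$ comes by (\MP) from $\chi_j$ and $\chi_k=\chi_j\ra\chi_i$ with $j,k<i$, the induction hypothesis gives $\Gamma\ent{L}\ph\ra\chi_j$ and $\Gamma\ent{L}\ph\ra(\chi_j\ra\chi_i)$. The \eqref{CCpCqrCCpqCpr} instance $(\ph\ra(\chi_j\ra\chi_i))\ra((\ph\ra\chi_j)\ra(\ph\ra\chi_i))$ lies in $L$, and two applications of (\MP) yield $\Gamma\ent{L}\ph\ra\chi_i$.
\end{itemize}
Taking $i=n$ gives $\Gamma\ent{L}\ph\ra\ps$.

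The only point needing care is the one singled out above: we must not allow $(\Box\textsf{Ext})$, $(\Diamond\textsf{Ext})$ or rules in $\mathcal{R}$ to be applied to hypotheses inside derivations. If that were allowed, the induction would require, for instance, that $\ph\ra(\ps\eq\chi)$ yield $\ph\ra(\Box\ps\eq\Box\chi)$, which fails in general. The local definition of $\ent{L}$ excludes exactly this, so no rule other than (\MP) occurs in the induction, and the case analysis above is exhaustive.
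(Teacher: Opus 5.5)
Your proof is correct and is exactly the routine argument the paper has in mind. The paper writes out no proof: it says the result is ``proved routinely'' and notes that extensionality and the rules in $\mathcal{R}$ serve only to generate $L$, so modus ponens is the sole rule in derivations and is internalized by $\ra$. Your induction using instances of (A1) and (A2) over the full modal language, together with your remark on why the local definition of $\ent{L}$ matters, fills in precisely that reasoning.
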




\subsection{Neighborhood semantics for \textsf{CLoNE}}

In this section we introduce neighborhood semantics for modal expansions of $\clon$ by two modal operators $\Box, \Diamond$, which we will call $\eclon$. We will employ the framework presented in \cite{Drobyshevich2021} for \textsf{FDE}-based modal logics\footnote{The work above is mentioned for the proximity with the system developed here. The original presentation of bi-neighborhood semantics is given by \cite{DalOliNeg18}.}. To this end, let us first recall some definitions and then introduce the semantic conditions for $\eclon$.



\begin{definition}\label{def_neighborhood}
For a set $W$ we use $\mP W$ to denote the powerset of $W$ and $\dP W$ to denote $\mP W\times\mP W$. We will use $X$, $Y$ to denote elements of $\mP W$ and $\mX$, $\mY$ to denote elements of $\dP W$; for $ \mX\in\dP W $ we will always assume that $\mX=\tuple{X^+,X^-}$. 
An \emph{n-function (neighbourhood function)} on $W$ is a function $N: W\ra\mP\mP W$. 
A bimodal \emph{$\eclon$-frame} is a tuple $\mW=\tuple{W,N\pl_\Box,N\pl_\Diamond }$,
where $W$ is a non-empty set and $N\pl_\Box,N\pl_\Diamond$
are n-functions on $W$. A bimodal \emph{$\eclon$-model} is $\mu=\tuple{\mW,v\pl,v\mn}$, where $\mW$ is a bimodal $\eclon$-frame and $ v\pl, v\mn$ are \emph{neighborhood valuations}, which map elements of $\mV$ to subsets of $W$. For a bimodal $\eclon$-model $\mu$ we will sometimes use $W_\mu$ to denote the underlying set of worlds, but if it is clear from the context we will omit it.  We will also relativize swap valuation $v(\ph)=(v_1(\ph),v_2(\ph))$ to elements of $W$, as follows: $v_w(\ph)=(v_{(1,w)}(\ph),v_{(2,w)}(\ph))$.\end{definition}

\begin{definition}\label{def_verif}
For any formula $\ph$, let $\ptru{\ph}_{\mu} \subseteq W$ and $\mtru{\ph}_{\mu} \subseteq W$ be its \emph{verification set} and \emph{rejection set}, respectively, such that the following conditions are satisfied:
\begin{center}
	\begin{tabular}{l l }
		$\ptru{p}_{\mu}$ & $= v\pl(p) =\{w \in W : v_{(1,w)}(p)=1\} $\\ 
		$\mtru{p}_{\mu}$ & $= v\mn(p) =\{w \in W : v_{(2,w)}(p)=1\} $\\
		$\ptru{\neg\ph}_{\mu}$ & $=\mtru{\ph}_{\mu}$\\
		$\ptru{\ph\wedge\ps}_{\mu}$ & $=\ptru{\ph}_{\mu}\cap\ptru{\ps}_{\mu}$; \\
		$\ptru{\ph\vee\ps}_{\mu}$ & $=\ptru{\ph}_{\mu}\cup\ptru{\ps}_{\mu}$; \\
		$\ptru{\ph\ra\ps}_{\mu}$ & $=\overline{\ptru{\ph}_{\mu}}\cup\ptru{\ps}_{\mu}$; \\
		$\ptru{\Box\ph}_{\mu}$ & $=\set{w\in W\mid \ptru{\ph}_\mu\in N\pl_\Box(w)}$; \\
        $\ptru{\Diamond\ph}_{\mu}$ & $=\set{w\in W\mid\ptru{\ph}_\mu\in N\pl_\Diamond(w)}$; \\
	\end{tabular}
\end{center}

\noindent where $\overline{X}:=W\setminus X$. For a set of formulas $\Gamma$ put $\ptru{\Gamma}_{\mu}:=\bigcap_{\ph\in\Gamma}\ptru{\ph}_{\mu}$, with $\ptru{\varnothing}_{\mu}:=W_\mu$. Furthermore, for convenience and later use we define $\trp{\ph}_\mu:=\tuple{\ptru{\ph}_{\mu},\mtru{\ph}_{\mu}}$ as the \emph{proof pair of $\ph$}. In \cite{MaLin20c} $\ptru{\ph}_{\mu}$,  $\mtru{\ph}_{\mu}$ and $ \trp{\ph}_\mu$ are called the \emph{polarity}. 

Bimodal $\mathsf{CLuNE}$- and $\mathsf{CLaNE}$-frames can be obtained by forcing $\ptru{\ph}\cup \mtru{\ph} = W$ and $\ptru{\ph}\cap \mtru{\ph} = \emptyset$, respectively, for any $\ph$.
\end{definition}

\begin{remark}
Note that in the above definition some formulas $\ph$ will have arbitrary rejection sets, because we do not impose any conditions on the rejection sets for $\neg$, $\wedge$, $\vee$ and $\to$. But this way we will stay faithful to the interpretation of the operators in $\clon$. In particular, since the rejection set are arbitrary, typical principles like the De Morgan laws or double negation elimination or introduction are not valid in $\clon$. Nevertheless, for a concrete bimodal \emph{$\eclon$-model} $\mu$, verification and rejection sets are always defined. In fact, every model can be equally expressed by the set of all of its proof pairs.
\end{remark}

\begin{remark}
Furthermore, we can use above definitions to define corresponding operations on $\dP W$ in a way, which is very reminiscent of how the operations are defined over swap-structures for $\clon$. For instance, for $\mX,\mY\in\dP W$ put:
\[
\sneg\mX:=\tuple{X^-,Z};\quad
\mX\wedge\mY:=\tuple{X^+\cap Y^+,Z};\quad
\mX\vee\mY:=\tuple{ X^+\cup Y^+, Z}.
\] 
Note $Z \in \mP W$ is arbitrary.
\end{remark}





\begin{proposition}\label{p-trueset-prop}
	Suppose $\Gamma$ and $\Delta$ are sets of formulas, then:
	\begin{enumerate}
		\item $\Gamma\subseteq\Delta\eqthen\ptru{\Delta}_{\mu}\subseteq\ptru{\Gamma}_{\mu}$;
		\item $\ptru{\Gamma\cup\Delta}_{\mu}=\ptru{\Gamma}_{\mu}\cap\ptru{\Delta}_{\mu}$;
		\item if $\Delta$ is the closure of $\Gamma$ under modus ponens then $\ptru{\Gamma}_{\mu}=\ptru{\Delta}_{\mu}$.
	\end{enumerate}
\end{proposition}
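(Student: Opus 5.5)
All three items follow from the definition $\ptru{\Gamma}_{\mu}:=\bigcap_{\ph\in\Gamma}\ptru{\ph}_{\mu}$ (with $\ptru{\varnothing}_{\mu}=W_\mu$) together with the clause for $\to$ in Definition~\ref{def_verif}. We treat them in order, each one reusing the previous.

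For item 1, let $w\in\ptru{\Delta}_{\mu}$. Then $w\in\ptru{\ph}_{\mu}$ for every $\ph\in\Delta$, in particular for every $\ph\in\Gamma$ since $\Gamma\subseteq\Delta$. Hence $w\in\ptru{\Gamma}_{\mu}$. The case $\Gamma=\varnothing$ is covered by the convention $\ptru{\varnothing}_{\mu}=W_\mu$.

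For item 2, a world $w$ lies in $\ptru{\Gamma\cup\Delta}_{\mu}$ iff $w\in\ptru{\ph}_{\mu}$ for all $\ph\in\Gamma\cup\Delta$. This holds iff $w$ verifies all members of $\Gamma$ and all members of $\Delta$, that is, iff $w\in\ptru{\Gamma}_{\mu}\cap\ptru{\Delta}_{\mu}$. Again the empty cases are handled by $\ptru{\varnothing}_{\mu}=W_\mu$, which is the neutral element for $\cap$ on subsets of $W_\mu$.

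Item 3 is the only one needing an argument. Since $\Gamma\subseteq\Delta$, item 1 gives $\ptru{\Delta}_{\mu}\subseteq\ptru{\Gamma}_{\mu}$. For the converse, fix $w\in\ptru{\Gamma}_{\mu}$ and show by induction that every $\ps\in\Delta$ satisfies $w\in\ptru{\ps}_{\mu}$.
\begin{itemize}
\item Describe $\Delta$ as $\bigcup_n\Delta_n$, where $\Delta_0=\Gamma$ and $\Delta_{n+1}$ adds to $\Delta_n$ all $\ps$ such that $\ph,\ph\ra\ps\in\Delta_n$. Equivalently, induct on the length of modus ponens derivations from $\Gamma$.
\item The base case is immediate.
\item In the inductive step, suppose $w\in\ptru{\ph}_{\mu}$ and $w\in\ptru{\ph\ra\ps}_{\mu}=\overline{\ptru{\ph}_{\mu}}\cup\ptru{\ps}_{\mu}$. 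Then $w\notin\overline{\ptru{\ph}_{\mu}}$, so $w\in\ptru{\ps}_{\mu}$.
\end{itemize}
Thus $w\in\ptru{\Delta}_{\mu}$.

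The only point needing care is this inductive step. It relies on the fact that the verification clause for $\to$ is exactly classical material implication pointwise, so truth sets are closed under modus ponens at each world. Note that no property of the rejection sets is used anywhere, so the arbitrariness of $\mtru{\cdot}_{\mu}$ causes no trouble.
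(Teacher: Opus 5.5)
Your proof is correct: items 1 and 2 are immediate from $\ptru{\Gamma}_{\mu}=\bigcap_{\ph\in\Gamma}\ptru{\ph}_{\mu}$, and item 3 follows by induction on the modus ponens closure, using that $\ptru{\ph\ra\ps}_{\mu}=\overline{\ptru{\ph}_{\mu}}\cup\ptru{\ps}_{\mu}$ is material implication at each world. The paper states this proposition without proof and treats it as routine; your argument is the standard verification it leaves implicit.
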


\begin{definition}
For any class $\mK$ of $\eclon$-frames we define its corresponding logic as follows:
\[
L(\mK):=\bigcap_{\mW\in\mK}\set{\ph\mid\ptru{\ph}_{\mu}=W_\mu\text{ for any }\eclon\text{-model }\mu\text{ over }\mW}.
\]
We also define the semantic counterpart of the consequence relation as follows:
\begin{center}
    $\Gamma\sent{\mK}\ph$ if $\ptru{\Gamma}_{\mu}\subseteq \ptru{\ph}_{\mu}$ for every $\mW\in\mK$ and $\eclon$-model $\mu$ over $\mW$;
\end{center}   
\end{definition}

\noindent Observe that clearly for any formula $\ph$ we have $\ph\in L(\mK)\iff\varnothing\sent{\mK}\ph$

\begin{definition}
 Let $\mK$ be a class of frames.
We say that a rule $\ph_1,\dotsc,\ph_n/\ps$ is \emph{valid} in $\mK$ iff, if $\ptru{\ph_1}=\dotsc=\ptru{\ph_n} = W_\mu$ then $\ptru{\ps} = W_\mu$. Then our definitions imply that a zero-premise rule $\varnothing/\ph$ is valid in $\mK$ iff $\ph\in L(\mK)$.  A set of rules $\mathcal{R}$ \emph{is valid in} $\mK$ if all rules in $\mathcal{R}$ are valid in $\mK$. A logic $L=\eclon+\mR$ \emph{is valid in} $\mK$ if $\eclon\subseteq L(\mK)$ and all rules in $ \mathcal{R} $ are valid in $\mK$; in this case $\mW\in\mK$ is called an \emph{$L$-frame}.
An extension $L$ of $\eclon$ is 
\emph{sound and complete} with respect to a class $\mK$ of $\eclon$-frames if for any set of formulas $\Gamma\cup\set{\ph}$ we have
\[
\Gamma\ent{L}\ph\iff\Gamma\sent{\mK}\ph
\]   
\end{definition}

The following simple proposition will be helpful:

\begin{proposition}\label{p_aux}
	Suppose $\mu$ is a $\eclon$-model and $\ph$, $\ps$ are two formulas, then
\begin{enumerate}
	\item $\ptru{\ph\ra\ps}_{\mu}=W_\mu\iff\ptru{\ph}_\mu\subseteq\ptru{\ps}_\mu$;
	\item $\ptru{\ph\eq\ps}_{\mu}=W_\mu\iff\ptru{\ph}_\mu=\ptru{\ps}_\mu \iff \ptru{\Box\ph}_\mu=\ptru{\Box\ps}_\mu \iff \ptru{\Diamond\ph}_\mu=\ptru{\Diamond\ps}_\mu$;
\end{enumerate}
\end{proposition}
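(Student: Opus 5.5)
The plan is to unfold the semantic clauses of Definition~\ref{def_verif} and use only elementary set algebra, since the verification set of a formula depends only on the verification sets of its immediate subformulas.

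For item 1, I use the clause $\ptru{\ph\ra\ps}_{\mu}=\overline{\ptru{\ph}_{\mu}}\cup\ptru{\ps}_{\mu}$. If $\ptru{\ph}_\mu\subseteq\ptru{\ps}_\mu$, then every $w\in W_\mu$ either lies outside $\ptru{\ph}_\mu$ or lies in $\ptru{\ps}_\mu$. Hence the union is all of $W_\mu$. Conversely, suppose the union is $W_\mu$ and $w\in\ptru{\ph}_\mu$. Then $w\notin\overline{\ptru{\ph}_\mu}$, so $w\in\ptru{\ps}_\mu$.

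For the first equivalence in item 2, I recall that $\ph\eq\ps$ abbreviates $(\ph\ra\ps)\wedge(\ps\ra\ph)$. By the conjunction clause, $\ptru{\ph\eq\ps}_\mu=W_\mu$ holds iff both $\ptru{\ph\ra\ps}_\mu=W_\mu$ and $\ptru{\ps\ra\ph}_\mu=W_\mu$. Applying item 1 twice, this holds iff the two inclusions hold, that is, iff $\ptru{\ph}_\mu=\ptru{\ps}_\mu$.

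For the remaining links, suppose $\ptru{\ph}_\mu=\ptru{\ps}_\mu$. For each $w$, the condition $\ptru{\ph}_\mu\in N\pl_\Box(w)$ is then literally the same condition as $\ptru{\ps}_\mu\in N\pl_\Box(w)$. So $\ptru{\Box\ph}_\mu=\ptru{\Box\ps}_\mu$, and the same argument with $N\pl_\Diamond$ gives $\ptru{\Diamond\ph}_\mu=\ptru{\Diamond\ps}_\mu$. By the first equivalence, this says $\ptru{\Box\ph\eq\Box\ps}_\mu=W_\mu$, which is exactly what later yields validity of $(\ocirc\textsf{Ext})$.

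The main obstacle is the converse directions of the last two equivalences. They cannot hold in general. Take $N\pl_\Box(w)=N\pl_\Diamond(w)=\varnothing$ for all $w$. Then $\ptru{\Box\ph}_\mu=\ptru{\Box\ps}_\mu=\varnothing$ and likewise for $\Diamond$, regardless of whether $\ptru{\ph}_\mu=\ptru{\ps}_\mu$. I would therefore prove the chain as a sequence of implications from left to right, after the first genuine biconditional, and note that this is all that is needed downstream for soundness of the extensionality rules. If a full converse were really intended, it would require an extra frame condition, for instance injectivity of $X\mapsto\set{w\mid X\in N\pl_\Box(w)}$, which is not part of the definition of an $\eclon$-frame.
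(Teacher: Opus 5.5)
The paper states this proposition without proof; it only calls it ``simple''. Your unfolding argument for item~1 and for $\ptru{\ph\eq\ps}_\mu=W_\mu\iff\ptru{\ph}_\mu=\ptru{\ps}_\mu$ is correct and is the obvious proof.

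Your objection to the rest of item~2 is also correct: read literally as a chain of biconditionals, the statement is false. Concretely, take $W=\{w\}$, $v\pl(p)=W$, $v\pl(q)=\varnothing$ and $N\pl_\Box(w)=N\pl_\Diamond(w)=\varnothing$. Then $\ptru{\Box p}_\mu=\ptru{\Box q}_\mu$ and $\ptru{\Diamond p}_\mu=\ptru{\Diamond q}_\mu$, but $\ptru{p}_\mu\neq\ptru{q}_\mu$.

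The paper only ever uses the implication from $\ptru{\ph}_\mu=\ptru{\ps}_\mu$ to the modal equalities, in the extensionality case of Theorem~\ref{t-basic-soundness}. The same overstatement recurs in the chain given for $(\neg\ocirc\textsf{Ext})$ in Section~\ref{main_extensions}. There too only the forward direction of $\mtru{\ph}_\mu=\mtru{\ps}_\mu\iff\mtru{\ocirc\ph}_\mu=\mtru{\ocirc\ps}_\mu$ is needed, and the converse fails for the same reason.

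One correction to your summary: you should not describe the salvaged claim as ``a sequence of implications from left to right''. The link $\ptru{\Box\ph}_\mu=\ptru{\Box\ps}_\mu\Rightarrow\ptru{\Diamond\ph}_\mu=\ptru{\Diamond\ps}_\mu$ also fails. Keep the same $W$, $p$, $q$, but set $N\pl_\Box(w)=\varnothing$ and $N\pl_\Diamond(w)=\{W\}$. Then $\ptru{\Box p}_\mu=\ptru{\Box q}_\mu=\varnothing$, while $\ptru{\Diamond p}_\mu=W\neq\varnothing=\ptru{\Diamond q}_\mu$. Swapping the roles of $N\pl_\Box$ and $N\pl_\Diamond$ refutes the converse. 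Since the two n-functions are unrelated, the two modal equalities are independent of each other.

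The correct form of item~2 is this: the first two conditions are equivalent, and each of them implies each of the two modal equalities. That is exactly what your argument proves, since you derive both modal equalities directly from $\ptru{\ph}_\mu=\ptru{\ps}_\mu$, so only the wording needs adjusting.
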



\begin{remark}
It is easy to see that some typical rules containing the negation operator, will not be valid in $\mK$, for obvious reasons. For example, $\neg \ph \leftrightarrow \neg \ps /  \Box \ph \leftrightarrow  \Box \ps$, since $\mtru{\ph} = \mtru{\ps}_{\mu} \nRightarrow \ptru{ \Box\ph}_\mu=\ptru{\Box\ps}_\mu$. 



     
 
\end{remark}

\begin{remark}
    We note that $\eclon$ and its extensions are deontically paraconsistent \cite{Coniglio2009b}, i.e., there are formulas $\ph, \ps$  such that $\obl \ph, \obl \neg \ph \nvdash_{\mathsf{CLxN}} \obl \ps)$, for $x \in \set{\sf o,u,a}$. To see that it fails, suppose $\ptru{\obl \ph \to (\obl \neg \ph \to \obl \ps)} = W_\mu$. That means that $\ptru{\obl \ph} \subseteq \ptru{\obl \neg \ph \to \obl \ps}$, by Proposition \ref{p_aux}. Since $\ptru{\obl \neg \ph \to \obl \ps} = \overline{\ptru{\obl \neg \ph}} \cup \ptru{\obl \ps}$, then $\ptru{\obl \ph} \subseteq \overline{\ptru{\obl \neg \ph}} \cup \ptru{\obl \ps}$. However, in our construction the latter is not valid. 
\end{remark}

\subsection{Soundness}\label{sub_main_sound}


\begin{theorem}
    Suppose $\mK$ is an arbitrary class of $\eclon$-frames then all axioms of $\clon$ are valid in $\mK$. 
\end{theorem}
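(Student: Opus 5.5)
The verification clauses for $\wedge$, $\vee$ and $\ra$ in Definition \ref{def_verif} make $\ptru{\cdot}_\mu$ a homomorphism from the $\set{\wedge,\vee,\ra}$-structure of formulas into the Boolean algebra $\mP W_\mu$, sending $\wedge,\vee,\ra$ to $\cap$, $\cup$ and $\overline{X}\cup Y$. The plan is to exploit this. Fix an arbitrary frame $\mW\in\mK$ and a model $\mu$ over it. For any instance of \textsf{A1}--\textsf{A9}, view the formulas $\ph,\ps,\ch$ substituted for the schematic letters as ``atoms''. The sets $X=\ptru{\ph}_\mu$, $Y=\ptru{\ps}_\mu$ and $Z=\ptru{\ch}_\mu$ are then just some subsets of $W$, whatever the main connective of $\ph,\ps,\ch$ is, including $\neg$, $\Box$ or $\Diamond$. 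By induction along the positive connectives of the schema, the verification set of the instance equals the Boolean term built from $X,Y,Z$ by $\cap$, $\cup$ and $\overline{(\cdot)}\cup(\cdot)$.

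It therefore suffices to show that each axiom schema, read as a Boolean term, evaluates to $W$ for all $X,Y,Z\subseteq W$. This is immediate because \textsf{A1}--\textsf{A9} are tautologies of classical positive logic; \textsf{A3}, Peirce's law, is a classical tautology as well. For an explicit argument, check pointwise: for each $w\in W$, let $x=1$ iff $w\in X$, and similarly for $y$ and $z$. Then $w$ lies in the set for the instance iff the corresponding classical truth function evaluates to $1$ at $(x,y,z)$. A two-valued truth-table check settles every case. Alternatively, I would verify a few directly via Proposition \ref{p_aux}(1). For \textsf{A7}, $X\cap Y\subseteq X$. For \textsf{A4}, $X\subseteq X\cup Y$. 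For \textsf{A1}, $X\subseteq\overline{Y}\cup X$.

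Finally, modus ponens preserves truth in $\mu$: if $\ptru{\ph}_\mu=W$ and $\ptru{\ph\ra\ps}_\mu=W$, then by Proposition \ref{p_aux}(1) we get $W\subseteq\ptru{\ps}_\mu$. This shows that all of $\clon$, and not only its axioms, lies in $L(\mK)$.

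No step is a genuine obstacle. The only point needing care is to note explicitly that the schematic letters may be instantiated by formulas with $\neg$, $\Box$ or $\Diamond$ as main connective, whose verification sets are arbitrary. Since no axiom of $\clon$ mentions these connectives, the argument never needs to inspect those sets, so the lack of falsity conditions for $\neg$ and the neighborhood clauses play no role.
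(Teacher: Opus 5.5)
Your proposal is correct and uses the same idea as the paper. The clauses for $\wedge$, $\vee$, $\ra$ in Definition \ref{def_verif} turn each axiom instance into a Boolean combination of the verification sets of the substituted formulas, so validity reduces to a set identity that holds whatever those sets are. The paper works this out only for \textsf{A3}, using Proposition \ref{p_aux}(1) and a direct set calculation, and calls the other axioms similar; your pointwise reduction to classical truth tables covers all nine schemata at once.
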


\begin{proof}

    We only show the case for {\sf A3}, as the other axioms can be proven in a similar manner. Assume for contradiction $\ptru{((\ph {\to} \ps){\to} \ph){\to} \ph}_\mu \not= W_\mu$,  by Definition \ref{def_verif} and Proposition \ref{p_aux} item 1., we have the following:  
   
    \begin{align*}
        \ptru{(\ph \to \ps)\to \ph}_\mu \nsubseteq \ptru{\ph}_\mu & \text{ iff } \overline{\ptru{\ph \to \ps}_\mu} \cup \ptru{\ph}_\mu \nsubseteq \ptru{\ph}_\mu  \\
         & \textsf{ iff } (\overline{\overline{\ptru{\ph}_\mu}} \cap \overline{\ptru{\ps}_\mu}) \cup \ptru{\ph}_\mu \nsubseteq \ptru{\ph}_\mu\\
        & \textsf{ iff } (\ptru{\ph}_\mu \cap \overline{\ptru{\ps}_\mu}) \cup \ptru{\ph}_\mu \nsubseteq \ptru{\ph}_\mu\\
        & \textsf{ iff } \ptru{\ph}_\mu \nsubseteq \ptru{\ph}_\mu
    \end{align*}
   
    \noindent which is impossible.    
\end{proof}

\begin{theorem}[Soundness]\label{t-basic-soundness}
	Suppose $\mK$ is an arbitrary class of $\eclon$-frames then the logic $\eclon$ and rules modus ponens, $\Box$-extensionality and $\Diamond$-extensionality are valid in $\mK$.
\end{theorem}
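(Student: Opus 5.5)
The proof splits into three independent parts: the axioms of $\clon$, modus ponens, and the two extensionality rules. Throughout, I fix an arbitrary frame $\mW\in\mK$ and an arbitrary $\eclon$-model $\mu$ over $\mW$, and write $W$ for $W_\mu$.

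For the axioms, I invoke the previous theorem, which gives $\ptru{\ph}_\mu=W$ for every instance of {\sf A1}--{\sf A9}. That argument only uses the verification-set clauses for $\wedge,\vee,\ra$, which are Boolean. Modal formulas $\Box\ph,\Diamond\ph$ and negated formulas $\neg\ph$ appear inside the axioms only as atoms, each with some fixed verification set. So the argument goes through unchanged for the full signature $\Sigma$, and all instances over $For(\Sigma)$ are valid in $\mK$.

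For modus ponens, I assume $\ptru{\ph}_\mu=W$ and $\ptru{\ph\ra\ps}_\mu=W$. By Proposition~\ref{p_aux}(1), the second gives $\ptru{\ph}_\mu\subseteq\ptru{\ps}_\mu$, hence $W\subseteq\ptru{\ps}_\mu$. The same computation, $\ptru{\ph}\cap(\overline{\ptru{\ph}}\cup\ptru{\ps})\subseteq\ptru{\ps}$, also shows that modus ponens preserves truth locally. I will need that local form for the consequence relation $\ent{L}$.

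For $(\Box\textsf{Ext})$, I assume $\ptru{\ph\eq\ps}_\mu=W$. By Proposition~\ref{p_aux}(2), $\ptru{\ph}_\mu=\ptru{\ps}_\mu$. Then for any $w$, $\ptru{\ph}_\mu\in N\pl_\Box(w)$ iff $\ptru{\ps}_\mu\in N\pl_\Box(w)$, since it is literally the same set. Hence $\ptru{\Box\ph}_\mu=\ptru{\Box\ps}_\mu$, and applying Proposition~\ref{p_aux}(2) once more gives $\ptru{\Box\ph\eq\Box\ps}_\mu=W$. The case $(\Diamond\textsf{Ext})$ is the same with $N\pl_\Diamond$ in place of $N\pl_\Box$.

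Finally, since $\eclon$ is the smallest set of formulas containing the axioms and closed under these rules, I run an induction on the length of derivations in $\eclon$. This shows that every theorem $\ph$ of $\eclon$ has $\ptru{\ph}_\mu=W$ in every model over every frame in $\mK$, so $\eclon\subseteq L(\mK)$.

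I do not expect any step to be a serious obstacle. The only point needing care is the one already noted for the axioms: the propositional argument must lift to $For(\Sigma)$ even though negation and the modal operators have no constraints on their rejection sets. This works because validity concerns only verification sets, and those are defined compositionally by Boolean clauses, with $\neg$ and the modal formulas treated as atoms.
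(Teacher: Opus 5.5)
Your proof is correct and follows essentially the same route as the paper. You use Proposition~\ref{p_aux} to show that modus ponens and the two extensionality rules preserve truth in every model, then conclude $\eclon\subseteq L(\mK)$ by induction on derivations, with the earlier theorem covering the axioms; you additionally close the extensionality step explicitly by reapplying Proposition~\ref{p_aux}(2) to get $\ptru{\Box\ph\eq\Box\ps}_\mu=W_\mu$, which the paper leaves implicit.
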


\begin{proof}
    We check first that the rules are valid in $\mK$. Let $\Gamma \subseteq For(\Sigma)$ such that $\ptru{\Gamma}_\mu = W_\mu$ and $\ph, \ph\ra \ps \in \Gamma$. Now clearly this means $w \in \ptru{\ph}_\mu$ and $w \in \ptru{\ph \ra \ps}_\mu$ for all $w \in W_\mu$, and hence by Proposition \ref{p_aux} $w \in \ptru{\ps}_\mu$ for all $w \in W_\mu$. For extensionality, it is easy to observe that again by Proposition \ref{p_aux} we have $\ptru{\ph \eq \ps}_\mu = W_\mu$ iff $\ptru{\ph}_\mu=\ptru{\ps}_\mu$, from which it follows that $\ptru{\Box\ph}_\mu = \ptru{\Box\ps}_\mu$. An analogous argument shows the validity of $\Diamond\textsf{Ext}$. 
     
    To see that $\eclon \subseteq L(\mK)$ notice that every $\ph$ in $\eclon$ is either an instance of an axiom, or is obtained from the axioms by $\MP$, $\Box\textsf{Ext}$ or $\Diamond\textsf{Ext}$ and since these rules are valid in every $\eclon$ model for every frame, then $\ph \in L(\mK)$.
\end{proof}

The following result shows that validity of a logic in a class implies both (local) soundness. 

\begin{theorem}\label{t-ext-soundness}
Suppose $L=\eclon+\mathcal{R}$. If $\mK$ is some class of $L$-frames, then for any set $\Gamma\cup\set{\ph}$ of formulas we have:
\[
(\Gamma\ent{L}\ph\Longrightarrow\Gamma\sent{\mK}\ph).
\]
\end{theorem}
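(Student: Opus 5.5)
I will first show that every theorem of $L$ is valid in $\mK$, that is, $L\subseteq L(\mK)$. Then I will show that derivations using modus ponens preserve inclusion of verification sets relative to $\ptru{\Gamma}_\mu$.

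For the first step I use induction on the construction of $L$ as the smallest set of formulas that contains the axioms of $\clon$ and is closed under \MP, $(\Box\textsf{Ext})$, $(\Diamond\textsf{Ext})$ and all rules in $\mathcal{R}$. Axioms of $\clon$ are valid in $\mK$ by the preceding theorem on $\clon$-axioms. Modus ponens and both extensionality rules preserve the property ``$\ptru{\cdot}_\mu=W_\mu$ for every model $\mu$ over every frame in $\mK$'' by Theorem~\ref{t-basic-soundness}. The rules in $\mathcal{R}$ preserve it because $L$ is valid in $\mK$ by hypothesis. Zero-premise rules are included here, since their validity means exactly that their conclusions lie in $L(\mK)$. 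Hence every $\ch\in L$ satisfies $\ptru{\ch}_\mu=W_\mu$ for all relevant $\mu$.

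For the second step, suppose $\Gamma\ent{L}\ph$. Then $\ph$ belongs to the closure $\Delta$ of $L\cup\Gamma$ under modus ponens. Fix $\mW\in\mK$ and a model $\mu$ over $\mW$. By Proposition~\ref{p-trueset-prop}(3), $\ptru{L\cup\Gamma}_\mu=\ptru{\Delta}_\mu$. By Proposition~\ref{p-trueset-prop}(2) and the first step, $\ptru{L\cup\Gamma}_\mu=\ptru{L}_\mu\cap\ptru{\Gamma}_\mu=W_\mu\cap\ptru{\Gamma}_\mu=\ptru{\Gamma}_\mu$. Since $\ph\in\Delta$, Proposition~\ref{p-trueset-prop}(1) gives $\ptru{\Delta}_\mu\subseteq\ptru{\ph}_\mu$. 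Together these yield $\ptru{\Gamma}_\mu\subseteq\ptru{\ph}_\mu$, which is $\Gamma\sent{\mK}\ph$.

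If Proposition~\ref{p-trueset-prop}(3) were not available, I would replace it with a direct induction on the length of the derivation. The key case is modus ponens: if $\ptru{\Gamma}_\mu\subseteq\ptru{\ps}_\mu$ and $\ptru{\Gamma}_\mu\subseteq\ptru{\ps\ra\ph}_\mu=\overline{\ptru{\ps}_\mu}\cup\ptru{\ph}_\mu$, then $\ptru{\Gamma}_\mu\subseteq\ptru{\ph}_\mu$. The only real subtlety in the whole argument is the first step. There one must keep global validity (preservation of $W_\mu$) separate from the local consequence relation, because the rules in $\mathcal{R}$ and extensionality are sound only in the global sense. That is exactly why they are applied only when computing $L$ and not inside derivations from $\Gamma$.
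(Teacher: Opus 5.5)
Your proof is correct and is essentially the paper's argument. You take the modus-ponens closure $\Delta$ of $\Gamma\cup L$, use Proposition~\ref{p-trueset-prop} to get $\ptru{\Gamma}_\mu\cap\ptru{L}_\mu=\ptru{\Delta}_\mu\subseteq\ptru{\ph}_\mu$, and obtain $\ptru{L}_\mu=W_\mu$ because the formulas true everywhere are closed under the generating rules of $L$. The only difference is cosmetic: the paper runs the closure argument for the per-model set $T_\mu=\set{\ps\mid\ptru{\ps}_\mu=W_\mu}$, while you run it for $L(\mK)$, which works under either a per-model or a per-frame reading of rule validity.
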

\begin{proof}

We consider the model $\mu = \tuple{\mW, v\pl, v\mn}$, where $\mW \in \mK$, thus $\mW$ is an $L$-frame.
To show the result, we assume that $\Gamma \ent{L}\ph$. This means that $\ph$ is derived from $\Gamma \cup L$ using only modus ponens. Let $\Delta$ be the closure of $\Gamma \cup L$ under modus ponens. Then, by Proposition \ref{p-trueset-prop}, $\ptru{\Gamma \cup L}_\mu = \ptru{\Gamma}_\mu \cap \ptru{L}_\mu = \ptru{\Delta}_\mu$. This also implies that $\set{\ph} \subseteq \Delta$, hence $\ptru{\Delta}_\mu \subseteq \ptru{\ph}_\mu$. We use the previous theorem to assert that $L \subseteq T_\mu$, where $T_\mu:= \set{\ph \mid\ \ptru{\ph}_\mu=W_\mu}$. From this we have $\ptru{L}_\mu = W_\mu$, since $T_\mu$ is closed under modus ponens and any rule in $\mR$, while also retaining $\eclon \subseteq T_\mu$ and $\ptru{T_\mu}_\mu = W_\mu$. Finally, the result follows from $\ptru{\Gamma}_\mu = \ptru{\Gamma}_\mu \cap W_\mu = \ptru{\Gamma \cup L}_\mu$ . \qedhere

\end{proof}

\subsection{Completeness}\label{sub_main_complete}

We are now ready to prove some basic completeness results. For the remainder of this section let us fix an extension $L=\eclon+\mathcal{R}$. We say that a set of formulas $ \Gamma $ is
\begin{enumerate}
	\item \emph{non-trivial}, if it does not coincide with the set of all formulas;
	\item a \emph{(local) $ L $-theory}, if $\Gamma\ent{L}\ph$ implies $\ph\in\Gamma$ for any formula $ \ph $;
	\item satisfies the \emph{disjunction property}, if $\ph\vee\ps\in\Gamma$ implies $\ph\in\Gamma$ or $\ps\in\Gamma$ for any $ \ph,\ps $;
	\item is a \emph{prime $ L $-theory}, if it is a non-trivial $ L $-theory, which satisfies the disjunction property.
\end{enumerate}


\noindent As usual, we have

\begin{lemma}[Extension]\label{lem_compl_extension}
	If $\Gamma\nent{L}\ph$ then there is a prime $L$-theory $\Delta$ such that $\Gamma\subseteq\Delta$ and $\Delta\nent{L}{\ph}$.
\end{lemma}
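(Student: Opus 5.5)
I will use the standard Lindenbaum construction via Zorn's lemma (or, since the language is countable, an enumeration of formulas). First, consider the family
\[
\mathcal{F}:=\set{\Delta\subseteq For(\Sigma)\mid \Gamma\subseteq\Delta \text{ and } \Delta\nent{L}\ph}.
\]
It is non-empty because $\Gamma\in\mathcal{F}$. It is closed under unions of chains because derivations in $\ent{L}$ are finite: a derivation of $\ph$ from $L\cup\bigcup_i\Delta_i$ by modus ponens uses only finitely many premises, which all lie in some single $\Delta_i$. So Zorn's lemma gives a maximal element $\Delta\in\mathcal{F}$. Alternatively, I can enumerate the formulas $\ps_0,\ps_1,\dots$ and add $\ps_n$ at stage $n$ whenever this keeps $\ph$ underivable.

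Next I check the three properties of a prime $L$-theory. Non-triviality is immediate, since $\ph\notin\Delta$. For being an $L$-theory, suppose $\Delta\ent{L}\ps$ but $\ps\notin\Delta$. By maximality, $\Delta\cup\set{\ps}\ent{L}\ph$. Then the Deduction Theorem gives $\Delta\ent{L}\ps\ra\ph$, and modus ponens yields $\Delta\ent{L}\ph$, a contradiction.

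The main step is the disjunction property, and it is the only place where the positive axioms matter. Suppose $\ps\vee\ch\in\Delta$ while $\ps\notin\Delta$ and $\ch\notin\Delta$. By maximality, $\Delta,\ps\ent{L}\ph$ and $\Delta,\ch\ent{L}\ph$. The Deduction Theorem then gives $\Delta\ent{L}\ps\ra\ph$ and $\Delta\ent{L}\ch\ra\ph$. Axiom {\sf A6} belongs to $L$ (as $\eclon$ contains all $\clon$ axioms), so two applications of modus ponens give $\Delta\ent{L}(\ps\vee\ch)\ra\ph$. Since $\ps\vee\ch\in\Delta$, one more application gives $\Delta\ent{L}\ph$, contradicting $\Delta\in\mathcal{F}$.

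No modal or negation reasoning is needed anywhere. The rules in $\mathcal{R}$ are already built into the set $L$, and $\ent{L}$ uses only modus ponens, which is why the Deduction Theorem applies without restriction. The only point requiring care is the finiteness of derivations in the chain argument.
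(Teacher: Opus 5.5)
Your proof is correct. It is the standard Lindenbaum argument: a maximal $\ph$-avoiding extension of $\Gamma$ via Zorn's lemma or an enumeration, theoryhood from maximality plus the Deduction Theorem, and the disjunction property from {\sf A6}. The paper gives no proof of its own and simply states the lemma ``as usual'', and this is the argument it has in mind.
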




Suppose $ \Sigma $ is a non-trivial 
$ L $-theory. By $ W\sa $ denote the set of all prime $ L $-theories containing $ \Sigma $. Clearly, $ W\sa\neq\varnothing $ by the Extension lemma. If, additionally, $\ph$ is some formula, put:
\[
\ptru{\ph}\sa:=\set{\Gamma\in W\sa\mid\ph\in\Gamma};\quad
\mtru{\ph}\sa:=\set{\Gamma\in W\sa\mid\neg\ph\in\Gamma}
\]
As before, for a set of formulas $\Gamma$ put $\ptru{\Gamma}\sa:=\bigcap_{\ph\in\Gamma}\ptru{\ph}\sa$ with $\ptru{\varnothing}\sa:=W\sa$. It is easy to see that $\ptru{\Gamma}\sa=\set{\Delta\in W\sa\mid\Gamma\subseteq\Delta}$.
We say that the n-function $N\pl_\Box$, $N\pl_\Diamond$ on $ W\sa $ are \emph{canonical (with respect to $\Sigma$)} if for any $\Gamma\in W\sa$ and formula $\ph$ we have:
\[\ptru{\Box\ph}_\Sigma=\set{\Gamma\in W\sa\mid \ptru{\ph}\sa\in N\pl_\Box(\Gamma)} \qquad
\ptru{\Diamond\ph}_\Sigma=\set{\Gamma\in W\sa\mid \ptru{\ph}\sa\in N\pl_\Diamond(\Gamma)}\]

\bigskip


A $\eclon$-frame $\mW\sa=\tuple{W\sa,N\pl_\Box,N\pl_\Diamond}$ is a \emph{canonical $L$-frame (with respect to $\Sigma$)} if both $N\pl_\Box$ and $N\pl_\Diamond$ are canonical with respect to $\Sigma$.

For a non-trivial 
$ L $-theory $ \Sigma $ and a canonical $ L $-frame $ \mW\sa $ we say that $\mu\sa=\tuple{\mW\sa,v\pl,v\mn}$ is a \emph{canonical $L$-model (with respect to $\Sigma$)} if for any $p\in\prop$ we have
\[
v\pl(p)=[p]\pls;\qquad
v\mn(p)=[p]\mns.
\]

\begin{lemma}[Canonical]\label{lem_compl_canonical}
	Suppose $\mu\sa=\tuple{W\sa,N\pl_\Box,N\pl_\Diamond,v\pl,v\mn}$ is a canonical $L$-model with respect to a non-trivial 
    $L$-theory $\Sigma$. Then for any $\Gamma\in W\sa$ and formula $\ph$ we have
	\[
	\ptru{\ph}_{\mu\sa}=\ptru{\ph}\sa;\qquad
	\mtru{\ph}_{\mu\sa}=\mtru{\ph}\sa.
	\]
\end{lemma}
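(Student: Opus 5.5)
The plan is a structural induction on $\ph$ that proves both equalities at once. One point needs settling first. Definition \ref{def_verif} gives recursive clauses only for verification sets; the rejection set of a compound formula is left unconstrained by the frame, apart from the atomic clause and the clause $\ptru{\neg\ph}_\mu=\mtru{\ph}_\mu$. A model is therefore determined by its proof pairs, and I will read the canonical model $\mu\sa$ as the model whose rejection sets are fixed by $\mtru{\ph}_{\mu\sa}:=\mtru{\ph}\sa$ for every compound $\ph$. Any assignment of rejection sets is admissible, so this is legitimate. With this reading the second equality holds by definition for compound formulas and by the choice of $v\mn$ for atoms. The real content is the first equality, which I prove by induction on $\ph$, using the second equality for immediate subformulas where needed.

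\emph{Base case.} For $p\in\mV$ we have $\ptru{p}_{\mu\sa}=v\pl(p)=[p]\pls$ and $\mtru{p}_{\mu\sa}=v\mn(p)=[p]\mns$, by the definition of a canonical model.

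\emph{Negation.} We have $\ptru{\neg\ph}_{\mu\sa}=\mtru{\ph}_{\mu\sa}=\mtru{\ph}\sa=\set{\Gamma\in W\sa\mid\neg\ph\in\Gamma}=\ptru{\neg\ph}\sa$. This is exactly why $\mtru{\cdot}\sa$ is defined through membership of $\neg\ph$.

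\emph{Binary connectives.} These use the closure properties of prime $L$-theories. Conjunction follows from \textsf{A7}, \textsf{A8}, \textsf{A9}, since a theory contains $\ph\wedge\ps$ iff it contains both conjuncts. Disjunction follows from \textsf{A4}, \textsf{A5} together with the disjunction property.

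Implication is the one step needing care. We must show $\ph\ra\ps\in\Gamma$ iff $\ph\notin\Gamma$ or $\ps\in\Gamma$. Left to right is \MP. For right to left:
\begin{itemize}
\item If $\ps\in\Gamma$, use \textsf{A1}.
\item If $\ph\notin\Gamma$, first derive $\ph\vee(\ph\ra\ps)$ in positive classical logic, using Peirce's law \textsf{A3} and the Deduction Theorem. Since $\Gamma$ is an $L$-theory it contains this disjunction. The disjunction property then gives $\ph\ra\ps\in\Gamma$.
\end{itemize}

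\emph{Modal cases.} These are immediate. By the induction hypothesis $\ptru{\ph}_{\mu\sa}=\ptru{\ph}\sa$, so
\[\ptru{\Box\ph}_{\mu\sa}=\set{\Gamma\mid\ptru{\ph}\sa\in N\pl_\Box(\Gamma)}=\ptru{\Box\ph}\sa,\]
where the last equality is canonicity. The case of $\Diamond$ is the same.

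The main obstacle is the implication case, which needs the positive classical theorem $\ph\vee(\ph\ra\ps)$ and primeness. A second, smaller point is making explicit that the canonical model's rejection sets for compound formulas are taken to be $\mtru{\cdot}\sa$. Without that convention the second equality is not forced by the frame, so I would state it at the outset.
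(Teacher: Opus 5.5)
Your proof is correct and follows the paper's argument. It is an induction on the complexity of $\ph$, with the same negation case (reducing $\ptru{\neg\ph}$ to $\mtru{\ph}$) and the same modal case via canonicity. Your explicit stipulation that the canonical model takes $\mtru{\ph}\sa$ as the rejection set of each compound $\ph$ makes precise what the paper's negation case silently uses as an induction hypothesis. Your derivation of $\ph\vee(\ph\ra\ps)$ from Peirce's law, combined with primeness, supplies the implication case that the paper only describes as similar.
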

\begin{proof}
	As usual, by induction on the complexity of $\ph$. Let $\Gamma\in W\sa$, then we can distinguish the following cases (the cases for $\wedge$, $\vee$ and $\ra$ are proved in a similar manner):
    \begin{itemize}
    \item $\ph=\neg\ps$:\\
	$\Gamma\in\ptru{\neg\ps}_{\mu\sa}\iff\Gamma\in\mtru{\ps}_{\mu\sa}\iff
	\Gamma\in\mtru{\ps}\sa\iff\Gamma\in\ptru{\neg\ps}\sa.$
        
    \item $\ph=\ocirc\ps$ for $\ocirc \in \{\Box, \Diamond\}$:\\
	$\Gamma\in\ptru{\ocirc\ps}_{\mu\sa}\iff\ptru{\ps}_{\mu\sa}\in N\pl_\ocirc(\Gamma)\iff
	\ptru{\ps}\sa\in N\pl_\ocirc(\Gamma)\iff\Gamma\in\ptru{\ocirc\ps}\sa.$ \qedhere
    \end{itemize}
\end{proof}

\begin{theorem}[Completeness]\label{t-l-com-basic}
Let $L$ be a canonical extension of $\eclon$. Then $L$ is complete with respect to the class $\mK$ of all $L$-frames.
\end{theorem}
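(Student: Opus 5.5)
We prove the contrapositive: assuming $\Gamma\nent{L}\ph$, we build a frame in $\mK$ and a model over it in which $\ptru{\Gamma}_\mu\not\subseteq\ptru{\ph}_\mu$. Here ``canonical extension'' means that for every non-trivial $L$-theory $\Sigma$ there is a canonical $L$-frame $\mW\sa$ with respect to $\Sigma$ which is an $L$-frame, i.e.\ belongs to $\mK$. Once such a frame is fixed, the rest is standard machinery.

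\textbf{Step 1: choose the base theory.} Let $\Sigma$ be the set of theorems of $L$, i.e.\ the closure of $L$ under modus ponens. This is an $L$-theory, and it is non-trivial because $\varnothing\nent{L}\ph$ would otherwise fail for every formula. Alternatively, we can take $\Sigma:=\set{\ps\mid\varnothing\ent{L}\ps}$ directly. This choice ensures that every prime $L$-theory extending $\Gamma$ lies in $W\sa$.

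\textbf{Step 2: extend $\Gamma$ to a prime theory.} Since $\Gamma\nent{L}\ph$, Lemma~\ref{lem_compl_extension} gives a prime $L$-theory $\Delta\supseteq\Gamma$ with $\Delta\nent{L}\ph$, so in particular $\ph\notin\Delta$. Because $\Delta$ is an $L$-theory, it contains every $L$-theorem. Hence $\Delta\supseteq\Sigma$ and $\Delta\in W\sa$.

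\textbf{Step 3: build the canonical model.} Take the canonical $L$-frame $\mW\sa\in\mK$ guaranteed by canonicity. Equip it with the canonical valuations $v\pl(p)=[p]\pls$ and $v\mn(p)=[p]\mns$, giving a canonical $L$-model $\mu\sa$. By Lemma~\ref{lem_compl_canonical}, $\ptru{\ps}_{\mu\sa}=\ptru{\ps}\sa$ for every formula $\ps$. Therefore
\[
\Delta\in\ptru{\Gamma}_{\mu\sa}=\set{\Theta\in W\sa\mid\Gamma\subseteq\Theta},
\]
while $\Delta\notin\ptru{\ph}_{\mu\sa}$. So $\ptru{\Gamma}_{\mu\sa}\not\subseteq\ptru{\ph}_{\mu\sa}$, which means $\Gamma\nsent{\mK}\ph$.

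\textbf{Step 4: combine with soundness.} Theorem~\ref{t-ext-soundness} supplies the converse direction, since $\mK$ is a class of $L$-frames. Together with Step 3 this yields $\Gamma\ent{L}\ph\iff\Gamma\sent{\mK}\ph$.

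\textbf{Main obstacle.} The substantive point is that canonical n-functions actually exist and that the resulting frame validates $L$. Existence is handled by well-definedness: put $X\in N\pl_\Box(\Theta)$ iff $X=\ptru{\ps}\sa$ for some $\ps$ with $\Box\ps\in\Theta$, and define $N\pl_\Diamond$ analogously. This is independent of the choice of $\ps$, because $\ptru{\ps}\sa=\ptru{\chi}\sa$ implies $\ps\eq\chi\in\Sigma$, via Lemma~\ref{lem_compl_extension} applied relative to $\Sigma$ together with the deduction theorem. Then $(\Box\textsf{Ext})$ gives $\Box\ps\eq\Box\chi\in\Sigma\subseteq\Theta$, and $(\Diamond\textsf{Ext})$ does the same for $\Diamond$.

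Membership in $\mK$, however, depends on the particular rules in $\mathcal{R}$. This is exactly what the hypothesis ``canonical extension'' is meant to absorb, so the proof only needs to invoke it rather than verify it rule by rule.
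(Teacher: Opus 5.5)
Your proposal is correct and follows the paper's proof. Both take $\Sigma=L$, use the Extension lemma to get a prime $L$-theory containing $\Gamma$ but not $\ph$, and apply the Canonical lemma in a canonical $L$-frame that the hypothesis places in $\mK$. Your construction of the canonical n-functions, with well-definedness via $(\Box\textsf{Ext})$ inside $\Sigma=L$, and your appeal to Theorem~\ref{t-ext-soundness} for the converse are sound additions that the paper leaves implicit, not a different route.
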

\begin{proof}
	Suppose $\Gamma\nent{L}\ph$, then by the Extension lemma there is a prime $L$-theory $\Gamma'$ such that $\Gamma\subseteq\Gamma'$ and $\Gamma'\nent{L}\ph$.
	Put 
    $\Sigma=L$, then $\Sigma$ is non-trivial.
	Consider some canonical $L$-model $\mu\sa=\tuple{\mW\sa,v\pl,v\mn}$ with respect to $\Sigma$, such that $L$ is valid in $\mW\sa$. 
    Then we have $\Gamma'\in\ptru{\Gamma}\sa$ and $\Gamma'\notin\ptru{\ph}\sa$. Then by the Canonical lemma we obtain $\ptru{\Gamma}_{\mu\sa}=\ptru{\Gamma}\sa\not\subseteq\ptru{\ph}\sa=\ptru{\ph}_{\mu\sa}$.
Then we have $\Gamma\nsent{\mW\sa}\ph$ and since $\mW\sa\in\mK$, we also have $\Gamma\nsent{\mK}\ph$.
\end{proof}


\begin{corollary}
$\eclon$ is sound and complete with respect to the class of all $\eclon$-frames. 
\end{corollary}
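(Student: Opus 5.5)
Soundness is given by Theorem \ref{t-ext-soundness} applied with $\mathcal{R}=\varnothing$. By Theorem \ref{t-basic-soundness}, every $\eclon$-frame is an $\eclon$-frame in the sense of validity: $\eclon\subseteq L(\mK)$ for the class $\mK$ of all frames, and there are no extra rules to check. Theorem \ref{t-ext-soundness} then yields $\Gamma\ent{\eclon}\ph\Rightarrow\Gamma\sent{\mK}\ph$.

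For completeness I will invoke Theorem \ref{t-l-com-basic}, so the real work is to show that $\eclon$ is canonical. Concretely, fix $\Sigma=\eclon$ and exhibit n-functions $N\pl_\Box,N\pl_\Diamond$ on $W\sa$ satisfying the canonicity equations. I define, for $\Gamma\in W\sa$ and $\ocirc\in\{\Box,\Diamond\}$,
\[
N\pl_\ocirc(\Gamma):=\set{\ptru{\ps}\sa \mid \ocirc\ps\in\Gamma}.
\]
The inclusion $\ptru{\ocirc\ph}\sa\subseteq\set{\Gamma\mid\ptru{\ph}\sa\in N\pl_\ocirc(\Gamma)}$ is immediate from this definition.

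The converse inclusion is the main step. Suppose $\ptru{\ph}\sa=\ptru{\ps}\sa$ with $\ocirc\ps\in\Gamma$; we need $\ocirc\ph\in\Gamma$. The plan is to show that $\ptru{\ph}\sa=\ptru{\ps}\sa$ forces $\ph\eq\ps\in\eclon$. If $\ph\to\ps\notin\eclon$, then $\eclon\nent{\eclon}\ph\to\ps$, so by the Deduction Theorem $\set{\ph}\nent{\eclon}\ps$. The Extension lemma then gives a prime theory $\Delta\supseteq\set{\ph}$ with $\ps\notin\Delta$; since $\Delta$ is a theory it contains $\eclon=\Sigma$, so $\Delta\in W\sa$ and $\Delta$ separates the two sets, a contradiction. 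Hence $\ph\to\ps\in\eclon$, symmetrically $\ps\to\ph\in\eclon$, and by \textsf{A9} and MP $\ph\eq\ps\in\eclon$. Now $(\ocirc\textsf{Ext})$ gives $\ocirc\ph\eq\ocirc\ps\in\eclon\subseteq\Gamma$, and closure of $\Gamma$ under MP yields $\ocirc\ph\in\Gamma$. This also shows $N\pl_\ocirc$ is well defined, since it depends only on the set $\ptru{\ps}\sa$.

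Finally, the canonical frame must be a frame on which $\eclon$ is valid. This is trivial here because every frame is an $\eclon$-frame by Theorem \ref{t-basic-soundness}. I also check that canonical valuations $v\pl(p)=[p]\pls$, $v\mn(p)=[p]\mns$ exist, which is clear. Theorem \ref{t-l-com-basic} then gives completeness with respect to all $\eclon$-frames.

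The only delicate point I expect is ensuring that the separating theory contains $\Sigma$. This holds because every $L$-theory includes $L$, since $\varnothing\ent{L}\chi$ for each $\chi\in L$.
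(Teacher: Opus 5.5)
Your proof is correct and follows the route the paper intends: soundness from Theorems~\ref{t-basic-soundness} and~\ref{t-ext-soundness} with $\mathcal{R}=\varnothing$, and completeness from Theorem~\ref{t-l-com-basic}. You also supply the step the paper leaves implicit, namely that $\eclon$ is canonical, via the minimal canonical n-function $N\pl_\ocirc(\Gamma)=\set{\ptru{\ps}\sa\mid\ocirc\ps\in\Gamma}$ and the Extension lemma, Deduction Theorem and $(\ocirc\textsf{Ext})$ argument showing that $\ptru{\ph}\sa=\ptru{\ps}\sa$ forces $\ph\eq\ps\in\eclon$; one point you inherit from the paper is that, since rejection sets of complex formulas are unconstrained here, the canonical model must also take $\mtru{\chi}_{\mu\sa}:=\mtru{\chi}\sa$ for complex $\chi$ (not only for atoms), otherwise the $\neg$-case of Lemma~\ref{lem_compl_canonical} fails, e.g.\ for $\neg\neg p$.
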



\section{Extensions of \textsf{CLoNE}}\label{main_extensions}

In this section we will consider various axiomatic extensions of $\eclon$. 
Since in $\eclon$ semantics all falsity conditions (rejection sets) for complex formulas are omitted, it can be seen as too weak to be interesting by itself. Despite of that, as shown in Section \ref{sub_extensions}, positive extensions of $\eclon$ behave as expected. However, in order to be able to express modal axioms with negation and some well-known interaction axioms, while staying faithful to the base logic $\clon$, 
we will extend our semantics given in Definition \ref{def_verif} with two new conditions describing the rejection sets for $\Box$ and $\Diamond$:

\begin{center}
\begin{tabular}{l l}
$\mtru{\Box\ph}_{\mu}$ & $=\set{w\in W\mid \mtru{\ph}_\mu\in N\mn_\Box(w)}$; \\
$\mtru{\Diamond\ph}_{\mu}$ & $=\set{w\in W\mid\mtru{\ph}_\mu\in N\mn_\Diamond(w)}$; \\
\end{tabular}    
\end{center}

\noindent where $N\mn_\Box$ and $N\mn_\Diamond$ are two n-functions, independent of $N\pl_\Box$ and $N\pl_\Diamond$. We therefore consider now bimodal \emph{$\eclon$-frames} to be tuples of the form $\mW=\tuple{W,N\pl_\Box,N\pl_\Diamond,N\mn_\Box,N\mn_\Diamond }$, where $W$ is a non-empty set and $N\pl_\Box,N\pl_\Diamond,N\mn_\Box,N\mn_\Diamond$ are n-functions on $W$. This move allows us to evaluate negated modalities such as $\neg \Box$ or $\Diamond \neg$. 
Then all of the definitions and theorems from Section \ref{sub_main_weak} carry over for the extended semantics with only some slight alterations. 

Additionally, because of Proposition \ref{p_aux}, $\neg \ocirc\textsf{Ext}$ (where $\ocirc \in \{\Box, \Diamond\}$) will be valid, since $\ptru{\neg\ph\eq\neg\ps}_{\mu}=W_\mu\iff\ptru{\neg \ph}_\mu=\ptru{\neg\ps}_\mu \iff \mtru{\ph}_\mu=\mtru{\ps}_\mu \iff \mtru{\ocirc\ph}_\mu=\mtru{\ocirc\ps}_\mu \iff \ptru{\neg\ocirc\ph}_\mu=\ptru{\neg\ocirc\ps}_\mu \iff \ptru{\neg\ocirc\ph\eq\neg\ocirc\ps}_{\mu}=W_\mu$. We therefore have to extend the Hilbert-calculus for $\eclon$ by
\[
(\neg \ocirc\textsf{Ext})\ \neg\ph\eq\neg\ps/\neg\ocirc\ph\eq\neg\ocirc\ps.
\]

\noindent Soundness is then proved as above, by considering the extended Hilbert-calculus and completeness requires only a small alteration of the Extension Lemma \ref{lem_compl_extension}: 
\begin{quote}
We say that the n-functions $N\mn_\ocirc$ are \emph{canonical (with respect to $\Sigma$)} if for any $\Gamma\in W\sa$ and formula $\ph$ we have:
$\mtru{\ocirc\ph}_\Sigma=\set{\Gamma\in W\sa\mid \mtru{\ph}\sa\in N\mn_\ocirc(\Gamma)}$ and a $\eclon$-frame $\mW\sa=\tuple{W\sa,N\pl_\Box,N\pl_\Diamond,N\mn_\Box,N\mn_\Diamond}$ is a \emph{canonical $L$-frame (with respect to $\Sigma$)} if all $N\pl_\ocirc$ and $N\mn_\ocirc$ are canonical with respect to $\Sigma$.    
\end{quote}

\noindent For the extended proof of Lemma \ref{lem_compl_canonical} we only require the additional cases for the rejections of the modal operators, then completeness follows: 
\begin{quote}
$\Gamma\in\mtru{\ocirc\ps}_{\mu\sa}\iff\mtru{\ps}_{\mu\sa}\in N\mn_\ocirc(\Gamma)\iff
	\mtru{\ps}\sa\in N\mn_\ocirc(\Gamma)\iff\Gamma\in\mtru{\ocirc\ps}\sa.$   
\end{quote}
	


\subsection{Monomodal Extensions}\label{sub_extensions}

We start by showing that positive monomodal\footnote{Under monomodal extensions of $\eclon$ we understand extensions by axioms that contain only one (or multiple instances of the same) modal operator.} extensions of $\eclon$, i.e. those not requiring negation, behave in that framework as expected and then briefly discuss monomodal extensions with negation. We leave axioms that have two different modalities for Section \ref{sub_int_extensions}. In order to simplify our presentation, we consider extensions for both modal operators, $\Box$ and $\Diamond$, simultaneously. Let $\ocirc \in \{\Box, \Diamond\}$ and consider the following axioms:

\begin{center}
\begin{tabular}{ll}
    $\sf M$ & $\ocirc (\ph \land \ps) \to (\ocirc \ph \land \ocirc \ps)$\\
    $\sf C$ & $ \ocirc \ph \land \ocirc \ps \to \ocirc (\ph \land \ps)$\\
    $\sf K$ & $\ocirc (\ph \to \ps) \to (\ocirc \ph \to \ocirc \ps)$\\
    $\sf M_\lor$ & $\ocirc (\ph \lor \ps) \to \ocirc \ph \lor \ocirc \ps$\\
    $\sf T$ & $\ocirc \ph \to \ph$\\
    $\sf 4$ & $\ocirc \ph \to \ocirc \ocirc \ph$
\end{tabular}
\end{center}

\noindent Then we easily obtain the following lemma.

\begin{lemma}\label{lem_posext}
A $\eclon$-frame which satisfies for any proof pairs $\mX$ and $\mY$
\begin{enumerate}
    \item if $X\pl \cap Y\pl \in N\pl_\ocirc(w)$, then $X\pl,Y\pl \in N\pl_\ocirc(w)$ (\emph{Augmentation}) corresponds to $\sf M$,
    \item if $X\pl,Y\pl \in N\pl_\ocirc(w)$, then $X\pl\cap Y\pl \in N\pl_\ocirc(w)$ (\emph{Closure by Intersection}) corresponds to $\sf C$,
    \item both \emph{Augmentation} and \emph{Closure by Intersection} corresponds to $\sf K$,
    \item if $X\pl \cup Y\pl \in N\pl_\ocirc(w)$, then $X\pl \in N\pl_\ocirc(w)$ or $Y\pl \in N\pl_\ocirc(w)$ (\emph{Primeness}) corresponds to $\sf M_\lor$,
    \item if $X\pl \in N\pl_\ocirc(w)$, then $w \in X\pl$ (\emph{Reflexivity}) corresponds to $\sf T$,
    \item if $X\pl \in N\pl_\ocirc(w)$, then $\set{w\prim \in W : X\pl \in N\pl_\ocirc(w\prim)} \in N\pl_\ocirc(w)$ (\emph{Tran\-si\-ti\-vi\-ty}) corresponds to $\sf 4$.
\end{enumerate}
\end{lemma}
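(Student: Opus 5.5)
We read ``corresponds to'' in the sense the paper needs for soundness and completeness: each frame condition makes the axiom valid, and the canonical frame of $\eclon$ extended by the axiom can be chosen to satisfy the condition, so Theorem~\ref{t-l-com-basic} applies. For each item we therefore prove two directions.

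\emph{Validity.} Fix a model $\mu$ on a frame with the stated property. By Proposition~\ref{p_aux}, item 1, it suffices to show $\ptru{\text{antecedent}}_\mu\subseteq\ptru{\text{consequent}}_\mu$, unfolding the clauses of Definition~\ref{def_verif}. The proof pairs used are $\mX=\trp{\ph}_\mu$ and $\mY=\trp{\ps}_\mu$.
\begin{itemize}
\item For $\sf M$: $w\in\ptru{\ocirc(\ph\land\ps)}$ means $X\pl\cap Y\pl\in N\pl_\ocirc(w)$, and Augmentation gives both conjuncts.
\item For $\sf C$: we argue dually, using Closure by Intersection.
\item For $\sf K$: from $\ptru{\ph\to\ps}$ and $\ptru{\ph}$ both in $N\pl_\ocirc(w)$, Closure gives $(\overline{X\pl}\cup Y\pl)\cap X\pl=X\pl\cap Y\pl\in N\pl_\ocirc(w)$. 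The set $Y\pl$ is itself of the form $(X\pl\cap Y\pl)\cap Y\pl$, which is how Augmentation is phrased for pairs of proof sets. Augmentation then yields $Y\pl\in N\pl_\ocirc(w)$.
\item For $\sf M_\lor$: use Primeness.
\item For $\sf T$: use Reflexivity.
\item For $\sf 4$: the set $\set{w\prim: X\pl\in N\pl_\ocirc(w\prim)}$ is exactly $\ptru{\ocirc\ph}_\mu$, so Transitivity gives $w\in\ptru{\ocirc\ocirc\ph}_\mu$.
\end{itemize}
Only verification sets are used, so the rejection n-functions play no role.

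\emph{Canonicity.} Let $L=\eclon+\{\text{axiom}\}$ and $\Sigma=L$. Define the minimal canonical function $N\pl_\ocirc(\Gamma)=\set{\ptru{\ph}\sa : \ocirc\ph\in\Gamma}$, and take the $N\mn_\ocirc$ analogously. This is well defined and canonical because of the extensionality rule. If $\ptru{\ph}\sa=\ptru{\ps}\sa$, the Extension Lemma~\ref{lem_compl_extension} and the Deduction Theorem give $\vdash_L\ph\eq\ps$. Then $(\ocirc\textsf{Ext})$ gives $\ocirc\ph\eq\ocirc\ps\in L\subseteq\Gamma$, so membership of $\ptru{\ph}\sa$ in $N\pl_\ocirc(\Gamma)$ does not depend on the representative $\ph$. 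In the canonical model the proof pairs are exactly the pairs $\trp{\ph}\sa$, by Lemma~\ref{lem_compl_canonical}, so each frame condition only has to be checked on sets $\ptru{\ph}\sa$. Using $\ptru{\ph}\sa\cap\ptru{\ps}\sa=\ptru{\ph\land\ps}\sa$, and the analogues for $\lor$ and $\to$, each condition reduces to the corresponding axiom being in $\Gamma$ together with closure of $\Gamma$ under modus ponens and primeness:
\begin{itemize}
\item Augmentation follows from $\sf M$.
\item Closure by Intersection follows from $\sf C$.
\item Primeness follows from $\sf M_\lor$ and the disjunction property of $\Gamma$.
\item Reflexivity follows from $\sf T$.
\item Transitivity follows from $\sf 4$, since $\set{\Gamma\prim:\ptru{\ph}\sa\in N\pl_\ocirc(\Gamma\prim)}=\ptru{\ocirc\ph}\sa$.
\end{itemize}

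\emph{Main obstacle: canonicity for $\sf K$.} Deriving $\sf M$ and $\sf C$ from $\sf K$ in this non-normal setting is not automatic. We would first obtain $\sf M$ via extensionality: $\ph\land\ps\eq\ph\land(\ph\land\ps\to\ph)$ is a tautology of positive logic, and $\ocirc(\ph\land\ps\to\ph)$ is needed as a theorem, which is not available without necessitation. If that fails, we would instead treat item 3 semantically only: we would prove validity of $\sf K$ under both conditions, and for the canonical direction assume $\sf M$ and $\sf C$ alongside $\sf K$, so that $\sf K$ is understood in the presence of $\sf M$ and $\sf C$. Its canonical frame is then covered by items 1 and 2.
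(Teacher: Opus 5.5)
\textbf{The other half of your argument proves a different statement.} Your validity half is the same as the paper's. In the paper, ``corresponds to'' means frame correspondence. The converse it proves is: if a frame violates the condition at some $w$ for some $X\pl,Y\pl$, then the axiom is not valid on that frame. This is shown by a countermodel on that same frame. Put $v\pl(p)=X\pl$ and $v\pl(q)=Y\pl$ (rejection sets play no role), and the $p,q$-instance of the axiom fails at $w$. For $\sf 4$, for example, $\ptru{\ocirc p}_\mu=\set{w\prim\in W: X\pl\in N\pl_\ocirc(w\prim)}\notin N\pl_\ocirc(w)$, so $w\in\ptru{\ocirc p}_\mu\setminus\ptru{\ocirc\ocirc p}_\mu$. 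Canonicity cannot replace this. Showing that the single frame $\mW\sa$ can be made to satisfy the condition says nothing about an arbitrary frame on which the axiom is valid. So the ``only if'' direction of items 1, 2, 4, 5, 6 is not proved in your proposal, and you need the countermodel construction above. Your suspicion about item 3 is well founded: at frame level $\sf K$ does not force Augmentation. Let $N\pl_\ocirc(w)=\set{\varnothing}$ for all $w$. Then $w\in\ptru{\ocirc\ph}_\mu$ forces $\ptru{\ph}_\mu=\varnothing$, hence $\ptru{\ph\to\ps}_\mu=W\notin N\pl_\ocirc(w)$, so $\sf K$ is valid. Yet $W\cap\varnothing\in N\pl_\ocirc(w)$ while $W\notin N\pl_\ocirc(w)$. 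So for item 3 only the validity direction holds, and the paper's ``proven similar to 1'' does not go through there.

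\textbf{Your canonicity argument also has a gap.} It breaks at the step ``each frame condition only has to be checked on sets $\ptru{\ph}\sa$''. Theorem~\ref{t-l-com-basic} needs $L$ valid on the frame $\mW\sa$, that is, under every valuation on $W\sa$. A valuation may send atoms to subsets not of the form $\ptru{\ps}\sa$; call these non-definable. Take your minimal function $N\pl_\ocirc(\Gamma)=\set{\ptru{\ph}\sa:\ocirc\ph\in\Gamma}$, some $\ocirc\ph\in\Gamma$, and a non-definable $X\supsetneq\ptru{\ph}\sa$. With $v\pl(p)=X$ and $v\pl(q)=\ptru{\ph}\sa$, $\sf M$ fails at $\Gamma$. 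Likewise Primeness fails once $\ptru{\ph}\sa$ splits into two non-definable pieces. The repairs are as follows.
\begin{itemize}
\item For $\sf M$, use the supplemented function $\set{X\subseteq W\sa : \ptru{\ph}\sa\subseteq X\text{ for some }\ph\text{ with }\ocirc\ph\in\Gamma}$. It is still canonical: $\ptru{\ph}\sa\subseteq\ptru{\ps}\sa$ gives $\ph\eq(\ph\land\ps)\in L$, so $\ocirc(\ph\land\ps)\in\Gamma$ by $(\ocirc\textsf{Ext})$, and then $\ocirc\ps\in\Gamma$ by $\sf M$.
\item For $\sf M_\lor$, add every non-definable set to $N\pl_\ocirc(\Gamma)$. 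Then every non-neighbourhood is definable, and $\sf M_\lor$ together with primeness of $\Gamma$ applies.
\item For $\sf C$, $\sf T$ and $\sf 4$, the minimal function does work.
\end{itemize}
A small slip in your $\sf K$ step: $(X\pl\cap Y\pl)\cap Y\pl$ equals $X\pl\cap Y\pl$, not $Y\pl$. Just apply Augmentation to $X\pl\cap Y\pl$ directly.
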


\begin{proof}
See appendix.    
\end{proof}

\begin{remark}
    We note that the properties above are similar to their standard versions. This is a result of the positive fragment of $\eclon$ being equivalent to the positive fragment of classical logic. In that sense, the restrictions to neighborhood semantics that lead to certain classes of frames, for example, neighborhood frames satisfying Augmentation corresponding to logics that satisfy $\textsf{M}$. 
\end{remark}

\noindent Let us now consider the following axioms:
\begin{center}
\begin{tabular}{ll}
    $\sf M\mn$ & $\ocirc (\neg \ph \land \neg \ps) \to (\neg \ocirc \ph \land \neg \ocirc \ps)$\\
    $\sf C\mn$ & $ \neg \ocirc \ph \land \neg \ocirc \ps \to \ocirc (\neg \ph \land \neg \ps)$\\
    $\sf M\mn_\lor$ & $\ocirc (\neg \ph \lor \neg \ps) \to \neg \ocirc \ph \lor \neg \ocirc \ps$\\
    $\sf T\mn$ & $\neg \ocirc \ph \to \neg \ph$\\
\end{tabular}
\end{center}

\noindent Then we easily obtain the following lemma.
\begin{lemma}\label{lem_negext} A $\eclon$-frame which satisfies for any proof pairs $\mX$ and $\mY$
\begin{enumerate}
    \item if $X\mn\cap Y\mn \in N\pl_\ocirc(w)$ then $X\mn,Y\mn \in N\mn_\ocirc(w)$ corresponds to $\sf M\mn$, 

    
    \item if $X\mn,Y\mn \in N\mn_\ocirc(w)$, then $X\mn\cap Y\mn \in N\pl_\ocirc(w)$ corresponds to $\sf C\mn$, 


    \item if $X\mn \cup Y\mn \in N\pl_\ocirc(w)$, then $X\mn \in N\mn_\ocirc(w)$ or $Y \in N\mn_\ocirc(w)$ corresponds to $\sf M\mn_\lor$,


    \item if $X\mn \in N\mn_\ocirc(w)$, then $w \in X\mn$ corresponds to $\sf T\mn$.

\end{enumerate}
\end{lemma}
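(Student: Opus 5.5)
Each item is read as the usual two-part correspondence claim, as in Lemma \ref{lem_posext}: the axiom is valid in every frame satisfying the stated condition (soundness), and in the canonical construction of Section \ref{sub_main_complete}, extended with $N\mn_\ocirc$, the condition can be imposed on the canonical frame (completeness). Both directions use only Proposition \ref{p_aux}, the clauses $\ptru{\neg\ph}_\mu=\mtru{\ph}_\mu$ and $\ptru{\ph\land\ps}_\mu=\ptru{\ph}_\mu\cap\ptru{\ps}_\mu$, and the new rejection clauses $\mtru{\ocirc\ph}_\mu=\set{w\mid \mtru{\ph}_\mu\in N\mn_\ocirc(w)}$. The conditions range over proof pairs $\mX=\trp{\ph}_\mu$, $\mY=\trp{\ps}_\mu$, so the sets involved are always realized by formulas. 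Note that item 3 should read $Y\mn\in N\mn_\ocirc(w)$.

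\textbf{Soundness.} By Proposition \ref{p_aux}(1) it suffices to show that the verification set of the antecedent is contained in that of the consequent. For $\sf M\mn$ we compute $\ptru{\ocirc(\neg\ph\land\neg\ps)}_\mu=\set{w\mid \mtru{\ph}_\mu\cap\mtru{\ps}_\mu\in N\pl_\ocirc(w)}$. The consequent's verification set is $\mtru{\ocirc\ph}_\mu\cap\mtru{\ocirc\ps}_\mu=\set{w\mid \mtru{\ph}_\mu,\mtru{\ps}_\mu\in N\mn_\ocirc(w)}$, so condition 1 gives the inclusion directly. $\sf C\mn$ is the converse inclusion, which is exactly condition 2. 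For $\sf M\mn_\lor$ the same computation uses $\cup$ and condition 3. For $\sf T\mn$ we need $\mtru{\ocirc\ph}_\mu\subseteq\mtru{\ph}_\mu$, which is condition 4. In every case the computation ends in one line.

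\textbf{Completeness.} We use the canonical frame with respect to $\Sigma=L$. Following the construction in \cite{Drobyshevich2021}, we define $N\pl_\ocirc(\Gamma)=\set{\ptru{\ph}\sa\mid \ocirc\ph\in\Gamma}$ and $N\mn_\ocirc(\Gamma)=\set{\mtru{\ph}\sa\mid\ocirc\ph\ldots}$, with the second clause made precise as $N\mn_\ocirc(\Gamma)=\set{\mtru{\ph}\sa\mid \neg\ocirc\ph\in\Gamma}$. These are well defined because $\ptru{\ph}\sa=\ptru{\ps}\sa$ gives $\vdash_L\ph\eq\ps$ by the Extension lemma, and then $(\ocirc\textsf{Ext})$ applies. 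In the same way, $\mtru{\ph}\sa=\mtru{\ps}\sa$ gives $\vdash_L\neg\ph\eq\neg\ps$, and then $(\neg\ocirc\textsf{Ext})$ applies. The rules $(\neg\ocirc\textsf{Ext})$ are used here.

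We then check each condition. For condition 1, suppose $\mtru{\ph}\sa\cap\mtru{\ps}\sa\in N\pl_\ocirc(\Gamma)$. Then $\mtru{\ph}\sa\cap\mtru{\ps}\sa=\ptru{\neg\ph\land\neg\ps}\sa$, so $\ocirc(\neg\ph\land\neg\ps)\in\Gamma$ by well-definedness. Axiom $\sf M\mn$ then gives $\neg\ocirc\ph,\neg\ocirc\ps\in\Gamma$, that is, $\mtru{\ph}\sa,\mtru{\ps}\sa\in N\mn_\ocirc(\Gamma)$. Conditions 2 and 3 follow the same pattern, using that $\Gamma$ is a theory closed under conjunction and, for condition 3, prime. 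Condition 4 uses $\sf T\mn$ and closure of $\Gamma$ under $\MP$.

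\textbf{Main obstacle.} The delicate point is that the hypotheses mix $N\pl$ and $N\mn$. The set $X\mn\cap Y\mn$ must be recognized inside $N\pl_\ocirc(\Gamma)$ as the verification set of a formula, namely $\neg\ph\land\neg\ps$. Membership must then be transferred to some $\ocirc\chi\in\Gamma$ with $\ptru{\chi}\sa=\ptru{\neg\ph\land\neg\ps}\sa$, and well-definedness via $(\ocirc\textsf{Ext})$ is exactly what handles this step. The restriction of the conditions to proof pairs is essential here, because arbitrary subsets of $W\sa$ need not be truth sets.
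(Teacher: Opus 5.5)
Your soundness half matches the first half of each item in the paper's appendix proof. The gap is in the other half.

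\textbf{The missing converse.} In the paper, ``corresponds to'' means frame correspondence, and Lemma~\ref{lem_posext} is proved the same way. A frame validates the axiom \emph{if and only if} it satisfies the condition. You never prove the ``only if'' direction. Your canonicity argument cannot stand in for it: showing that one canonical frame satisfies the condition says nothing about an arbitrary frame on which, say, $\mathsf{M}^{-}$ is valid.

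What is missing is a short countermodel construction on the given frame. Suppose condition 1 fails at $w$ for sets $X\mn,Y\mn$. Choose a valuation with $v\mn(p)=X\mn$ and $v\mn(q)=Y\mn$. Then $\ptru{\neg p\land\neg q}_\mu=X\mn\cap Y\mn\in N\pl_\ocirc(w)$, so $w\in\ptru{\ocirc(\neg p\land\neg q)}_\mu$. But $w\notin\mtru{\ocirc p}_\mu\cap\mtru{\ocirc q}_\mu=\ptru{\neg\ocirc p\land\neg\ocirc q}_\mu$. Items 2 and 3 go the same way. For item 4, the valuation $v\mn(p)=X\mn$ refutes $\neg\ocirc p\to\neg p$ at $w$.

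\textbf{Why the restriction to proof pairs misleads.} You say the restriction to proof pairs is ``essential''; this is where the misreading lies. A frame condition must hold for the proof pairs of \emph{every} model over the frame. Since $\trp{p}_\mu=\tuple{v\pl(p),v\mn(p)}$ can be any pair of subsets, the conditions are in effect about all subsets of $W$. The paper's converse direction relies on exactly this.

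Read this way, your canonical argument also fails as a completeness proof for items 1 and 3. Your $N\mn_\ocirc(\Gamma)$ contains only sets of the form $\mtru{\ps}\sa$, and there are only countably many of these.
\begin{itemize}
\item Item 1: suppose $\ocirc\neg p\in\Gamma$ and $Y\supseteq\mtru{p}\sa$ is not of that form. Then $\mtru{p}\sa\cap Y\in N\pl_\ocirc(\Gamma)$ while $Y\notin N\mn_\ocirc(\Gamma)$. By the correspondence, this frame does not validate $\mathsf{M}^{-}$. So it is not an $L$-frame as Theorem~\ref{t-l-com-basic} requires.
\item Item 3: split an infinite member of $N\pl_\ocirc(\Gamma)$ into two pieces, neither of the form $\mtru{\ps}\sa$.
\end{itemize}
Items 2 and 4 survive on your canonical frame. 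Their hypotheses only mention members of $N\mn_\ocirc(\Gamma)$, and these are all definable.
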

\begin{proof}
See appendix.    
\end{proof}

\begin{remark}\label{rem_undefinable}
It is worth noting that because of the weak negation operator formulas that contain $\neg \ocirc \neg \ph$, e.g. $\ocirc \ph \to \neg \ocirc \neg \ph$, can not be captured by conditions of the frame alone, since $w\in \ptru{\neg \ocirc \neg \ph}_\mu = \mtru{\ocirc \neg \ph}_\mu$ and therefore $\mtru{\neg \ph}_\mu \in N\mn_\ocirc(w)$. But the rejection set of $\neg \ph$ is not defined and can therefore not be described by corresponding frame conditions. 
\end{remark}

\begin{remark}
Note that the lists of axioms considered in this section are of course not exhaustive but rather stand as examples of the possibilities this semantics offers. 
\end{remark}

\subsection{Multimodal Extensions}\label{sub_int_extensions}

As in Section \ref{sub_extensions} we want to simplify our presentation. Hence, we consider extensions for both modal operators, $\Box$ and $\Diamond$, simultaneously and let $\ocirc,\bcirc \in \{\Box, \Diamond\}$. Note that $\ocirc$ and $\bcirc$ do not need to be different. Now, consider the following axioms:

\begin{center}
\begin{tabular}{ll}
    $\sf D$ & $\ocirc \ph \to \bcirc \ph$\\
    $\sf D\mn$ & $\neg \ocirc \ph \to \neg \bcirc \ph$\\
    $\sf I_1$ & $\neg \ocirc \ph \to \bcirc \neg \ph$\\
    $\sf I_2$ & $\ocirc \neg \ph \to \neg \bcirc \ph$\\
\end{tabular}
\end{center}

\noindent Then we easily obtain the following lemma.
\begin{lemma}\label{lem_intext} A $\eclon$-frame which satisfies for any proof pairs $\mX$ and $\mY$
\begin{enumerate}
    \item if $X\pl \in N\pl_\ocirc(w)$, then $X\pl \in N\pl_\bcirc(w)$ corresponds to $\sf D$,


    \item if $X\mn \in N\mn_\ocirc(w)$, then $X\mn \in N\mn_\bcirc(w)$ corresponds to $\sf D\mn$,


    \item if $X\mn \in N\mn_\ocirc(w)$, then $X\mn \in N\pl_\bcirc(w)$ corresponds to $\sf I_1$,


    \item if $X\mn \in N\pl_\ocirc(w)$, then $X\mn \in N\mn_\bcirc(w)$ corresponds to $\sf I_2$.

\end{enumerate}
\end{lemma}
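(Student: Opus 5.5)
Each of the four items will be proved in the same way as Lemmas \ref{lem_posext} and \ref{lem_negext}. ``Corresponds'' is read as: the frame condition makes the axiom valid in every model on the frame (soundness), and the condition holds in the canonical frame of the logic extended by the axiom (so completeness follows by Theorem \ref{t-l-com-basic}). For soundness, by Proposition \ref{p_aux} it suffices to show $\ptru{\text{antecedent}}_\mu\subseteq\ptru{\text{consequent}}_\mu$. We unfold the clauses of Definition \ref{def_verif} together with the new rejection clauses for $\Box,\Diamond$, and use $\ptru{\neg\ch}_\mu=\mtru{\ch}_\mu$.

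For {\sf D}, take $w\in\ptru{\ocirc\ph}_\mu$. Then $\ptru{\ph}_\mu\in N\pl_\ocirc(w)$, so by the condition $\ptru{\ph}_\mu\in N\pl_\bcirc(w)$, i.e.\ $w\in\ptru{\bcirc\ph}_\mu$.

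For ${\sf D}\mn$, $w\in\ptru{\neg\ocirc\ph}_\mu=\mtru{\ocirc\ph}_\mu$ gives $\mtru{\ph}_\mu\in N\mn_\ocirc(w)$, hence $\mtru{\ph}_\mu\in N\mn_\bcirc(w)$, hence $w\in\ptru{\neg\bcirc\ph}_\mu$.

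For ${\sf I_1}$, $w\in\mtru{\ocirc\ph}_\mu$ gives $\mtru{\ph}_\mu\in N\mn_\ocirc(w)$, so $\mtru{\ph}_\mu\in N\pl_\bcirc(w)$. Since $\ptru{\neg\ph}_\mu=\mtru{\ph}_\mu$, this means $w\in\ptru{\bcirc\neg\ph}_\mu$.

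For ${\sf I_2}$, $w\in\ptru{\ocirc\neg\ph}_\mu$ means $\ptru{\neg\ph}_\mu=\mtru{\ph}_\mu\in N\pl_\ocirc(w)$, so $\mtru{\ph}_\mu\in N\mn_\bcirc(w)$, i.e.\ $w\in\ptru{\neg\bcirc\ph}_\mu$.

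Each condition is applied only to sets of the form $X\pl$ or $X\mn$ of proof pairs, so restricting the quantification to proof pairs costs nothing.

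For the canonical direction, fix $\Sigma=L$ and choose canonical n-functions on $W\sa$ in the standard minimal way: $N\pl_\ocirc(\Gamma)=\set{\ptru{\ps}\sa : \ocirc\ps\in\Gamma}$ and $N\mn_\ocirc(\Gamma)=\set{\mtru{\ps}\sa : \neg\ocirc\ps\in\Gamma}$. Well-definedness rests on two facts. First, $\ptru{\ps}\sa=\ptru{\ch}\sa$ implies $\ps\eq\ch\in L$, via the Extension lemma and the Deduction theorem. Second, $(\ocirc\textsf{Ext})$ and $(\neg\ocirc\textsf{Ext})$ then transfer membership, using $\mtru{\ps}\sa=\ptru{\neg\ps}\sa$. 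We then check the conditions. For {\sf I_1}, if $\mtru{\ps}\sa\in N\mn_\ocirc(\Gamma)$ then $\neg\ocirc\ps\in\Gamma$, so by the axiom $\bcirc\neg\ps\in\Gamma$, and $\ptru{\neg\ps}\sa=\mtru{\ps}\sa\in N\pl_\bcirc(\Gamma)$. The other items are analogous.

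The main obstacle is a mismatch in the conditions. In items 3 and 4 the set $X\mn$ is used as an argument of $N\pl$, but $N\pl_\bcirc$ canonically contains sets of the form $\ptru{\ch}\sa$. The identity $\mtru{\ps}\sa=\ptru{\neg\ps}\sa$ is exactly what makes these match. For ${\sf I_2}$ we must argue the converse: given $\mtru{\ps}\sa=\ptru{\ch}\sa$ with $\ocirc\ch\in\Gamma$, we need $\ch\eq\neg\ps\in L$ and then $(\ocirc\textsf{Ext})$ to obtain $\ocirc\neg\ps\in\Gamma$. We apply the axiom only after this, so the proof never needs rejection sets of complex non-modal formulas, which by Remark \ref{rem_undefinable} would be unavailable.
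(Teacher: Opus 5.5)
Your soundness half (condition $\Rightarrow$ axiom valid) is exactly the paper's argument. The second half, however, proves something other than what the paper means by ``corresponds''. In the paper each item is a frame-definability claim, and the other direction is that if the axiom is valid on a frame, then the frame satisfies the condition. The paper shows this with a one-variable countermodel. If the condition fails at $w$ for some set $X$, choose a valuation with $\ptru{p}_\mu=X$ (item 1) or $\mtru{p}_\mu=X$ (items 2--4), and evaluate the instance with $\ph:=p$ at $w$. For ${\sf I}_2$, say, $X\in N\pl_\ocirc(w)$ and $\ptru{\neg p}_\mu=\mtru{p}_\mu=X$ give $w\in\ptru{\ocirc\neg p}_\mu$, while $X\notin N\mn_\bcirc(w)$ gives $w\notin\mtru{\bcirc p}_\mu=\ptru{\neg\bcirc p}_\mu$. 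Your proposal never establishes this direction, and a canonicity argument cannot stand in for it. Knowing that the condition holds on the canonical frame says nothing about an arbitrary frame on which the axiom is valid.

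Even on your own reading, the canonical step for ${\sf I}_2$ has a hole. To apply Theorem~\ref{t-l-com-basic} you need $\mW\sa$ to be an $L$-frame, i.e.\ ${\sf I}_2$ must be valid in \emph{every} model over $\mW\sa$. Your own soundness argument applies the condition to $\mtru{\ph}_\mu$ for arbitrary $\mu$, hence to arbitrary $X\subseteq W\sa$, not only to canonical sets $\mtru{\ps}\sa$. With your minimal n-functions this fails:
\begin{itemize}
\item Take $\Gamma$ with $\ocirc p\in\Gamma$ (such $\Gamma$ exist by the Extension lemma). Then $\ptru{p}\sa\in N\pl_\ocirc(\Gamma)$.
\item But $\ptru{p}\sa\notin N\mn_\bcirc(\Gamma)$. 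Otherwise $\ptru{p}\sa=\mtru{\ps}\sa=\ptru{\neg\ps}\sa$ for some $\ps$, which would give $p\eq\neg\ps\in L$.
\item For $L=\eclon+{\sf I}_2$ this is refuted by a one-world frame with empty neighbourhoods, $v\pl(p)=W$ and all rejection sets empty.
\item Hence the model with $v\mn(q):=\ptru{p}\sa$ refutes $\ocirc\neg q\to\neg\bcirc q$ at $\Gamma$.
\end{itemize}
Items 1--3 escape this only because membership in the antecedent neighbourhood already forces the canonical form. The repair is to add to $N\mn_\bcirc(\Gamma)$ every member of $N\pl_\ocirc(\Gamma)$ that is not of the form $\mtru{\ps}\sa$. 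This keeps $N\mn_\bcirc$ canonical and makes the condition hold for all $X$.
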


\begin{proof}
See appendix.    
\end{proof}

\begin{remark}
We just want to mention that it is also entirely possible to add even more axioms showing the interaction between the two different modalities. For example:
\begin{itemize}
    \item $\ocirc (\ph \land \ps) \to (\bcirc \ph \land \bcirc \ps)$ corresponds to if $X\pl \cap Y\pl \in N\pl_\ocirc(w)$, then $X\pl,Y\pl \in N\pl_\bcirc(w)$,

    \item $\ocirc (\ph \land \ps) \to (\neg\bcirc \ph \land \neg \bcirc \ps)$ corresponds to if $X\pl \cap Y\pl \in N\pl_\ocirc(w)$, then $X\mn,Y\mn \in N\mn_\bcirc(w)$.
\end{itemize}
But due to space restrictions, we will leave a thorough investigation of all possibilities for another time.
\end{remark}

\section{Philosophical remarks}\label{sec:philosophybit}

Although the main objective of the paper is to present a possibility of constructing modal versions of $\clon$, we believe that the system can also be put to good use in philosophical practice, although, due to its weak 
negation, its applicability might be rather restricted. The main argument for the use of such logics lie, for example, in \cite{VazMaruchi25}, where the authors get to the conclusion that, if paraconsistent logic is used as the basis for deontic purposes in the context of the $\lfi$s, then the base logic used should be $\clun$, an extension of $\clon$. 

However, as presented in \cite{Coniglio2009b, Coniglio2009, Peron2008, VazConiglio2025}, the semantics of paraconsistent deontic logics are presented using Kripke frames. This means that at least axiom $\sf K$ holds. Following \cite{Marcus1980} and others, Aggregation, corresponding to property $\sf C$ in Section \ref{sub_extensions}, is not a desired property for deontic systems. This fact motivates the presentation of the semantics in terms of neighborhood semantics, since we can avoid this commitment to stronger axioms. 

Another advantage to use ${\clon}$ as a base logic is the fact that $\clon$-negation cannot express modal notions as duals of each other. It requires that our frames keep track of each of the modal notions independently of each other, and thus, if we want to express $n$-many modal notions, we need $n$-many distinct neighborhood functions. Although prima facie this is not desirable, it also allows us to create multimodal systems in which each of the modal concepts are independent of each other, but that they can come together through bridge principles, as for example presented in Section \ref{sub_int_extensions}. Again, one example for the use of such a system, as discussed in \cite{VazMaruchi25}, would be the argument in favor of the existence of moral dilemmas while preserving most of the intuitive moral principles.

In Ethics, moral dilemmas are usually formalized through a bimodal logic, where $\obl$ denotes obligation and $\Diamond$ denotes metaphysical possibility, as follows \[\obl \ph \land \obl \psi \land \neg \Diamond (\ph \land \psi) \]

We call \textit{genuine moral dilemmas} those in which the obligations involved in a dilemma are not defeasible, i.e., they are obligations \textit{all-things-considered}. It is important to highlight this fact since referees always point out that there are similar alternatives to conflicts of obligations which are defeasible or involve non-monotonic reasoning of some sort. Our aim is also to provide a formal tool to cases in which defeasibllity cannot help us.

The problem with accepting the existence of genuine moral dilemmas is that they lead to contradictions when accepting the following pairs of principles, frequently taken to be intuitive and generally accepted as moral principles\footnote{A long debate about the existence of moral dilemmas can be found in a number of papers, for example \cite{Bohse2005,Conee1982, Holbo2002, Marcus1980, McConnell1978, ToddWeber2002, Weber2007}. The position that  defends moral dilemmas do not exist, take moral principles to be accepted based on their intuitiveness. Given they are on a par with our usually accepted logical systems, rejecting them would go against our rationality. Expressing moral dilemmas formally then lead to contradictions, thus, to trivialization, hence, their existence cannot be accepted together with the deontic principles. In response to this position, some authors defend a dialetheist account of moral systems, in order to accept moral dilemmas \cite{RibeiroTeles2021,  VazMaruchi25, Weber2007}.}:

\renewcommand{\arraystretch}{1.2}
\begin{center}
\begin{tabular}{c|c}
  Pair 1   &  Pair 2\\\hline
$\obl \ph \to \Diamond \psi$ & $\obl \ph \to \neg \obl \neg \psi$ \\
$(\obl \ph \land \obl \psi) \to \obl (\ph \land \psi)$ & $(\obl \ph \land \Box (\ph \to \psi)) \to \obl \psi$
\end{tabular}
\end{center}



A straightforward derivation having a moral dilemma as assumption leads us to a contradiction in both cases. A long debate about the existence of moral dilemmas can be found in a number of papers, for example \cite{Bohse2005,Conee1982, Holbo2002, Marcus1980, McConnell1978, ToddWeber2002, Weber2007}. The position that  defends moral dilemmas do not exist, take moral principles to be accepted based on their intuitiveness. Given they are on a par with our usually accepted logical systems, rejecting them would go against our rationality. Expressing moral dilemmas formally, as can be seen above, lead to contradictions, thus, to trivialization, hence, their existence cannot be accepted together with the usual deontic principles (represented by Pair 1 and Pair 2 above). In response to this position, some authors defend a dialetheist account of moral systems, in order to accept moral dilemmas \cite{RibeiroTeles2021,  VazMaruchi25, Weber2007}. 

Formally, this dialetheist account of moral dillemas can be supported by the logic $\mathsf{CLuND}$, which is a deontic logic based on $\clun$\footnote{The system was named \textit{DPI} in \cite{Coniglio2009b,Coniglio2009, Peron2008} in reference to the work written by Batens in 1980 \cite{Batens1980a}.}. The reasons for such a choice can be read at length on \cite{VazMaruchi25}. The paper cited offers an overview of the Logics of Deontic Inconsistency, while keeping track of how each of these systems deal with moral dilemmas, and how can moral dilemmas be accepted in the systems without resulting in trivialization. The diagnosis of the paper is that all of the deontic paraconsistent systems, except $\mathsf{CLuND}$, fall prey to the same problem: since they have the consistency operator $\circ$ in their propositional fragments, they end up collapsing deontic and formal notions. Although formally it has no consequence, when we interpret the obligation operator as moral obligations, the interaction between $\obl$ and $\circ$ feels artificial and ad hoc. $\mathsf{CLuND}$, on the other hand, keeps the possibility of having such a consistency arising on the modal level, thus being intepreted already in deontic terms. Not only this avoids the above problem, but also allows for moral dilemmas to be expressible naturally in the system, while also not leading to trivialization.

Beyond the point of the deontic consistency, the main focus is on the fact that $\clun$ as a basis allows for a deontic system that is genuinely paraconsistent, being possible to add a consistency operator on the level of modality alone. This step is optional, however and could, in principle be left out of the construction procedure. The motivation to have modal logics based on  ${\clon}$ is then clear: $\clun$ is a genuinely paraconsistent logic, the modal version of which gives rise to a genuinely paraconsistent logic which interprets deontic notions in a natural way and allows for the expressibility of moral dilemmas, without leading to explosion; $\clun$ extends $\clon$ by the addition of the Law of Excluded Middle as an axiom. To give a general account of $\clun$ as a modal logic, we need then to be able to give $\clon$ a modal account, which then will serve the purposes of being applied to $\clun$, the system which we actually want to use in more practical scenarios.

The acceptance of the deontic systems based on $\clun$ also come with a high price: accepting that contradictions, at least those deriving from obligations, occur. Instead of looking for a solution to a deontic problem, the formalization of a deontic version of $\clun$ simply gives the possibility of moral dilemmas to exist qua unsolvable deontic situations. Being unsolvable, however, does not mean everything follows from them.

In \cite{DalOliNeg18, Drobyshevich2021} the authors presented a systematic and thorough study of several extensions of non-normal \textsf{FDE}-based modal logics by employing so-called bi-neighborhood semantics. The main difference to the approach presented in this paper is that here the rejection sets for the operators are not defined. Now, it is entirely possible, similar to a refinement in non-deterministic semantics, to recapture the rejection sets and thus putting the neighborhood frames defined in this paper in the context of {\sf FDE}-based modal logics. If furthermore verification and rejection sets are exhaustive and exclusive we can connect our results to classical neighborhood semantics. The above fact speaks in favor of having a semantics for a modal account of $\clon$. 

As far as future research on extensions of $\eclon$ go, one obvious and promising direction would be the investigation of extension by rules rather then axioms, whether it is possible to replace axioms by rules and vice versa. For example, it seems pretty straightforward to replace $\ocirc(\ph \wedge \ps) \to (\ocirc \ph \wedge \ocirc \ps)$ by the monotonicity rule  $(\ocirc\textsf{Mon})\ \ph\to\ps/\ocirc\ph\to\ocirc\ps$. The replacement, however, asks for a definition of a global consequence relation, which, despite its general flavor, asks for a change of philosophical interpretation of the symbols and their generality.

Finally, we just want to mention that all results will easily carry over if we expand the language with even more modal operators. As Section \ref{main_extensions} shows, dealing with mixed axioms is an easy task in the proposed framework, being only a matter of picking the right restrictions on the neighborhood functions and how they interact with each other. 

\section{Final remarks}\label{sec:final}

In this paper we proposed to give semantics for modal logics based on $\clon$ and its extensions, $\clun$ and $\clan$. Due to its lack of axioms for the $\neg$ symbol, the semantic treatment of $\neg$ in $\clon$ must be non-deterministic. In order to capture the behavior of an unary operator that is not truth-functionally determined, we ought to be able to leave the truth assignments open to any possibility. This poses a challenge for defining a semantic structure that could formally determine the notion of consequence of such logics. 

Using the machinery of bi-neighborhood semantics developed in \cite{DalOliNeg18, Drobyshevich2021} and comments on the behavior of $\clon$-negation in \cite{OmoriSkurt2021}, we were able to design a structure that is able to deal with $\clon$-negation and further push the boundaries of the work on logics with a weak negation symbol. Moreover, our framework is easily extendable to multimodal logics and even logics with other propositional operators, such as the $\lfi$s. It is also possible to extend this approach to logics that are usually formalized with $4$ truth values, such as the case with $\sf FDE$. 

Philosophically, it means that we have now a formal account that is flexible enough to deal with sentences that would be otherwise avoided by other systems. This logic is hyperintensional with respect to negation, but the framework here proposed allows us to give them a solid mathematical treatment, despite this feature. Philosophical works on dialetheism, moral dilemmas, contradictions and other non-standard philosophical positions are sure to able to profit from the developments we propose here.

\bibliographystyle{apalike}
\bibliography{generic}

\appendix

\section*{Technical Appendix}

\subsection*{Proof of Lemma \ref{lem_posext}}

\begin{proof}\leavevmode
    \begin{enumerate}
        \item Assume $w \in \ptru{\ocirc (\ph \land \ps)}_\mu$, then $\ptru{\ph \land \ps}_\mu \in N\pl_\ocirc (w)$. Hence $\ptru{\ph}_\mu \cap \ptru{\ps}_\mu \in N\pl_\ocirc (w)$. By \emph{augmentation} $\ptru{\ph}_\mu \in N\pl_\ocirc (w)$, $\ptru{\ps}_\mu \in N\pl_\ocirc (w)$. Hence, $w \in \ptru{\ocirc \ph}_\mu$ and $w \in \ptru{\ocirc \ps}_\mu$, which implies $w \in \ptru{\ocirc \ph}_\mu \cap \ptru{\ocirc \ps}_\mu = \ptru{\ocirc \ph \land \ocirc \ps}_\mu$.
        

        For the other direction, assume \emph{augmentation} does not hold, i.e. there are $X\pl,Y\pl$ such that $X\pl\cap Y\pl \in N\pl_\ocirc(w)$ but $X\pl,Y\pl \notin N\pl_\ocirc(w)$. Now define a model $\mu$, where $\ptru{p}_\mu = X\pl$ and $\ptru{q}_\mu=Y\pl$. An easy calculation shows that $w \notin \ptru{\ocirc (\ph \land \ps) \to (\ocirc \ph \land \ocirc \ps)}_\mu$.

        \item Assume $w \in \ptru{\ocirc \ph \land \ocirc \ps}_\mu = \ptru{\ocirc \ph}_\mu \cap \ptru{\ocirc \ps}_\mu$. So $w \in \ptru{\ocirc \ph}_\mu$ and $w \in \ptru{\ocirc \ps}_\mu$. Hence, $\ptru{\ph}_\mu \in N\pl_\ocirc(w)$ and $\ptru{\ps}_\mu \in N\pl_\ocirc(w)$. By \emph{closure by intersection}, $\ptru{\ph}_\mu  \cap \ptru{\ps}_\mu \in N\pl_\ocirc(w)$, which implies $\ptru{\ph \land \ps}_\mu \in N\pl_\ocirc(w)$. Therefore, $w \in \ptru{\ocirc (\ph \land \ps)}_\mu$. 

        The other direction is proven similar to 1.

        \item Assume $w \in \ptru{\ocirc (\ph \to \ps)}_\mu$ and $w \in \ptru{\ocirc \ph}_\mu$.
    Then $\ptru{\ph \to \ps}_\mu \in N\pl_\ocirc(w)$ and $\ptru{\ph}_\mu \in N\pl_\ocirc(w)$. By \emph{closure by Intersection}, $\ptru{\ph \to \ps}_\mu \cap \ptru{\ph}_\mu \in N\pl_\ocirc (w)$. By definition, $(\overline{\ptru{\ph}_\mu} \cup \ptru{\ps}_\mu) \cap \ptru{\ph}_\mu \in N\pl_\ocirc (w)$. Hence $(\overline{\ptru{\ph}_\mu} \cap \ptru{\ph}_\mu) \cup (\ptru{\ps}_\mu \cap \ptru{\ph}_\mu) \in N\pl_\ocirc(w)$. Since $\overline{\ptru{\ph}_\mu} \cap \ptru{\ph}_\mu = \varnothing$, we have $\ptru{\ps}_\mu \cap \ptru{\ph}_\mu \in N\pl_\ocirc(w)$. By \emph{augmentation}, $\ptru{\ps}_\mu \in N\pl_\ocirc(w)$. Therefore, $w \in \ptru{\ocirc \ps}_\mu$.

    The other direction is proven similar to 1.

        \item Assume $ w \in \ptru{\ocirc (\ph \lor \ps)}_\mu$. This means that $\ptru{\ph \lor \ps}_\mu = \ptru{\ph}_\mu \cup \ptru{\ps}_\mu \in N\pl_\ocirc(w)$. By \emph{primeness}, $\ptru{\ph}_\mu \in N\pl_\ocirc(w)$ or $\ptru{\ps}_\mu \in N\pl_\ocirc(w)$. Hence, $w \in \ptru{\ocirc \ph}_\mu$ or $w \in \ptru{\ocirc \ps}_\mu$. 

        The other direction is proven similar to 1.

    \item Assume $w \in \ptru{\ocirc \ph}_\mu$. Then $\ptru{\ph}_\mu \in N\pl_\ocirc(w)$, which, by \emph{reflexivity}, implies $w \in \ptru{\ph}_\mu$.

    For the other direction assume \emph{reflexivity} does not hold, i.e. there is an $X\pl \in N\pl_\ocirc(w)$ but $w \notin X\pl$. Now define a model $\mu$, where $\ptru{p}_\mu = X\pl$. An easy calculation shows that $w \notin \ptru{\ocirc p \to p}_\mu$.

    \item Assume $w \in \ptru{\ocirc \ph}_\mu$. Then $\ptru{\ph}_\mu \in N\pl_\ocirc(w)$. By \emph{transitivity}, $\set{w\prim \in W : \ptru{\ph}_\mu \in N\pl_\ocirc(w\prim)} \in N\pl_\ocirc(w)$, which means $\ptru{\ocirc \ph }_\mu \in N\pl_\ocirc(w)$. Therefore $w \in \ptru{\ocirc\ocirc \ph}_\mu$. 
    
    The other direction is proven similar to 5.
\qedhere
    \end{enumerate}
\end{proof}

\subsection*{Proof of Lemma \ref{lem_negext}}
\begin{proof}\leavevmode
 \begin{enumerate}
    \item 
    Assume $w \in \ptru{\ocirc (\neg \ph \wedge \neg \ps)}_\mu$, then $\ptru{\neg \ph \wedge \neg \ps}_\mu \in N\pl_\ocirc(w)$, i.e. $\ptru{\neg \ph}_\mu\cap\ptru{\neg \ps}_\mu \in N\pl_\ocirc(w)$. Hence $\mtru{\ph}_\mu\cap\mtru{\ps}_\mu \in N\pl_\ocirc(w)$. By the condition, we have $\mtru{\ph}_\mu,\mtru{\ps}_\mu \in N\mn_\ocirc(w)$. This implies $w\in \mtru{\ocirc \ph}_\mu \cap \mtru{\ocirc \ps}_\mu$, hence $w\in \ptru{\neg\ocirc \ph}_\mu \cap \ptru{\neg\ocirc \ps}_\mu$, but then $w\in\ptru{\neg \ocirc \ph \wedge \neg \ocirc \ps}_\mu$.

    For the other direction, assume the condition does not hold, i.e. there are $X\mn,Y\mn$ such that $X\mn\cap Y\mn \in N\pl_\ocirc(w)$ but $X\mn,Y\mn \notin N\mn_\ocirc(w)$. Now define a model $\mu$, where $\mtru{p}_\mu = X\mn$ and $\mtru{q}_\mu=Y\mn$. An easy calculation shows that $w \notin \ptru{\ocirc (\neg \ph \land \neg \ps) \to (\neg \ocirc \ph \land \neg \ocirc \ps)}_\mu$. 
    
    \item 
    Assume $w\in\ptru{\neg \ocirc \ph \wedge \neg \ocirc \ps}_\mu$, then $w\in \ptru{\neg\ocirc \ph}_\mu \cap \ptru{\neg\ocirc \ps}_\mu$, i.e. $w\in \mtru{\ocirc \ph}_\mu \cap \mtru{\ocirc \ps}_\mu$. This implies $\mtru{\ph}_\mu,\mtru{\ps}_\mu \in N\mn_\ocirc(w)$. By the condition, we have $\mtru{\ph}_\mu\cap\mtru{\ps}_\mu \in N\pl_\ocirc(w)$, i.e. $\ptru{\neg \ph}_\mu\cap\ptru{\neg \ps}_\mu \in N\pl_\ocirc(w)$, hence $\ptru{\neg \ph \wedge \neg \ps}_\mu \in N\pl_\ocirc(w)$. But then $w \in \ptru{\ocirc (\neg \ph \wedge \neg \ps)}_\mu$.

    The other direction is proven similar to 1.

    \item 
    Assume $w \in \ptru{\ocirc(\neg \ph \lor \neg \ps)}_\mu$ then $\ptru{\neg \ph \lor \neg \ps}_\mu \in N\pl_\ocirc$, hence $\ptru{\neg \ph}_\mu \cup \ptru{\neg \ps}_\mu  \in N\pl_\ocirc$, i.e. $\mtru{\ph}_\mu \cup \mtru{\ps}_\mu  \in N\pl_\ocirc$. By the condition, we have $\mtru{\ph}_\mu \in N\mn_\ocirc(w)$ or $\mtru{\ps}_\mu \in N\mn_\ocirc(w)$. Hence $w \in \mtru{\ocirc \ph}_\mu$ or $w \in \mtru{\ocirc \ps}_\mu$, i.e. $w \in \ptru{\neg \ocirc \ph}_\mu$ or $w \in \ptru{\neg \ocirc \ps}_\mu$. But this means $w \in \ptru{\neg \ocirc \ph}_\mu \cup \ptru{\neg \ocirc \ps}_\mu$, therefore $w \in \ptru{\neg \ocirc \ph \vee \neg \ocirc \ps}_\mu$.

    The other direction is proven similar to 1.

    \item 
    Assume $w \in \ptru{\neg \ocirc \ph}_\mu$, then $w \in \mtru{\ocirc \ph}_\mu$, hence $\mtru{\ph}_\mu \in N\mn_\ocirc(w)$. By the condition we have $w\in \mtru{\ph}_\mu$, hence $w\in \ptru{\neg \ph}_\mu$.
    
    For the other direction assume the condition does not hold, i.e. there is an $X\mn \in N\mn_\ocirc(w)$ but $w \notin X\mn$. Now define a model $\mu$, where $\mtru{p}_\mu = X\mn$. An easy calculation shows that $w \notin \ptru{\ocirc p \to p}_\mu$.
    \qedhere
\end{enumerate}   
\end{proof}

\subsection*{Proof of Lemma \ref{lem_intext}}

\begin{proof}\leavevmode
\begin{enumerate}
    \item 
    Assume $w \in \ptru{\ocirc \ph}_\mu$, then $\ptru{\ph}_\mu \in N\pl_\ocirc(w)$. By the condition we have $\ptru{\ph}_\mu \in N\pl_\bcirc(w)$, i.e. $w \in \ptru{\bcirc \ph}_\mu$.

    The other direction is proven similar to 2.

    \item 
    Assume $w \in \ptru{\neg \ocirc \ph}_\mu$, hence $w \in \mtru{\ocirc \ph}_\mu$ then $\mtru{\ph}_\mu \in N\mn_\ocirc(w)$. By the condition we have $\mtru{\ph}_\mu \in N\mn_\bcirc(w)$, i.e. $w \in \mtru{\bcirc \ph}_\mu$. Therefore $w \in \ptru{\neg \bcirc \ph}_\mu$.

    For the other direction assume the condition does not hold, i.e. there is an $X\mn \in N\mn_\ocirc(w)$ but $X\mn \notin N\mn_\bcirc(w)$. Now define a model $\mu$, where $\mtru{p}_\mu = X\mn$. An easy calculation shows that $w \notin \ptru{\neg \ocirc p \to \neg \bcirc p}_\mu$.

    \item 
    Assume $w \in \ptru{\neg \ocirc \ph}_\mu$, hence $w \in \mtru{\ocirc \ph}_\mu$ then $\mtru{\ph}_\mu \in N\mn_\ocirc(w)$. By the condition we have $\mtru{\ph}_\mu \in N\pl_\bcirc(w)$, i.e. $\ptru{\neg \ph}_\mu \in N\pl_\bcirc(w)$. Therefore $w \in \ptru{\bcirc \neg \ph}_\mu$.

    The other direction is proven similar to 2.

    \item 
    Assume $w \in \ptru{\ocirc \neg \ph}_\mu$, hence $\ptru{\neg \ph}_\mu \in N\pl_\ocirc(w)$ Then $\mtru{\ph}_\mu \in N\pl_\ocirc(w)$. By the condition we have $\mtru{\ph}_\mu \in N\mn_\bcirc(w)$, i.e. $w \in \mtru{\bcirc \ph}_\mu$. Therefore $w \in \ptru{\neg \bcirc \ph}_\mu$.
    
    The other direction is proven similar to 2. \qedhere
\end{enumerate}    
\end{proof}

\end{document}